\newtheorem{lemma}{Lemma}
\newtheorem{proposition}[lemma]{Proposition}
\newtheorem{theorem}[lemma]{Theorem}
\theoremstyle{remark}
\newtheorem{remark}[lemma]{Remark}
\newcommand{\lf}{\lfloor}
\newcommand{\rf}{\rfloor}
\newcommand{\CC}{\mathcal{C}}
\newcommand{\UU}{\mathcal{U}}
\newcommand{\VV}{\mathcal{V}}
\newcommand{\CN}{\mathcal{CN}}
\newcommand{\lam}{\lambda}
\newcommand{\bU}{\mathbf{U}}
\newcommand{\bV}{\mathbf{V}}
\newcommand{\bA}{\mathbf{A}}
\newcommand{\bI}{\mathbf{I}}
\newcommand{\C}{{\mathbb{C}}}
\newcommand{\flr}[1]{\left\lfloor {#1} \right\rfloor}
\newcommand{\comments}[1]{}
\newcommand{\mat}[1]{\left(\begin{matrix}#1\end{matrix} \right)}
\newcommand{\x}{\mathbf{x}}
\newcommand{\y}{\mathbf{y}}
\newcommand{\z}{\mathbf{z}}
\newcommand{\bH}[2]{\mathbf{H}^{[#1 #2]}}
\newcommand{\bHb}[2]{\bar{\mathbf{H}}^{[#1 #2]}}
\newcommand{\herm}{{^*}}
\newcommand{\bAb}{{\bar{ \bA}}}
\newcommand{\im}{\mathrm{Im}}
\author{\IEEEauthorblockN{Guy Bresler}
\IEEEauthorblockA{Wireless Foundations\\
Dept of EECS\\
UC Berkeley\\
Berkeley, CA 94720\\
\textsf{gbresler@eecs.berkeley.edu}}
\and
\IEEEauthorblockN{Dustin Cartwright}
\IEEEauthorblockA{Yale University\\
New Haven, CT 06511\\
\textsf{dustin.cartwright@yale.edu}}
\and
\IEEEauthorblockN{David Tse}
\IEEEauthorblockA{Wireless Foundations\\
Dept of EECS\\
UC Berkeley\\
Berkeley, CA 94720\\
 \textsf{dtse@eecs.berkeley.edu}}}
\author{Guy Bresler
\thanks{Guy Bresler is with Wireless Foundations,
Dept of EECS, UC Berkeley, Berkeley, CA 94720,
Email: \textsf{gbresler@eecs.berkeley.edu}.
}
\and
Dustin Cartwright
\thanks{Dustin Cartwright is with
Yale University, New Haven, CT 06511,
Email: \textsf{dustin.cartwright@yale.edu}}
\and
David Tse
\thanks{David Tse is with Wireless Foundations,
Dept of EECS, UC Berkeley, Berkeley, CA 94720,
Email: \textsf{dtse@eecs.berkeley.edu}}}
\title{~\newline Geometry of the 3-user MIMO \\ interference channel\thanks{This paper appeared in the \emph{Allerton conference on communication, control, and computing} held September 28-30, 2011}}
\begin{document}
\maketitle
\begin{abstract}
This paper studies vector space interference alignment for the three-user MIMO interference channel with no time or frequency diversity.
The main result is a characterization of the feasibility of interference alignment in the symmetric case where all transmitters have $M$ antennas and all receivers have $N$ antennas. If $N\geq M$ and all users desire $d$ transmit dimensions, then alignment is feasible if and only if $(2r+1)d\leq \max(rN,(r+1)M)$ for all nonnegative integers $r$. The analogous result holds with $M$ and $N$ switched if $M\geq N$. 

It turns out that, just as for the 3-user parallel interference channel \cite{BT09}, the length of alignment paths captures the essence of the problem. In fact, for each feasible value of $M$ and $N$ the maximum alignment path length dictates both the converse and achievability arguments.   

One of the implications of our feasibility criterion is that simply counting equations and comparing to the number of variables \emph{does not predict feasibility}. Instead, a more careful investigation of the geometry of the alignment problem is required. The necessary condition obtained by counting equations is implied by our new feasibility criterion.  
\end{abstract}

\section{Introduction}
Interference alignment has presented the possibility of wildly better performance in interference-limited communications than traditionally thought possible. Introduced by Maddah-Ali et al.
\cite {MMK08} for the multiple-input multiple-output (MIMO) X channel and subsequently by
Cadambe and Jafar \cite{CJ08} in the context of the $K$-user interference
channel (IC), the basic idea is to align multiple interfering signals at each
receiver in order to reduce the effective interference. For the $K$-user IC, in
the case of independently faded parallel channels (i.e. time or frequency
selective), it was shown in \cite{CJ08} that up to $K/2$ total
degrees-of-freedom is achievable: this implies that each user gets
the same degrees of freedom as in a simple 2-user IC, irrespective of the number of users $K$. However, the result
depends critically on the assumption that the number of independently faded
parallel channels, i.e. the channel diversity, is \emph{unbounded} and in fact grows like $K^{2K^2}$. A physical
system has only a \emph{finite} channel diversity, which raises the question of how many degrees of freedom are achievable with the given fixed---finite---amount of diversity. This problem was addressed in the case of time and frequency diversity for 3-user channels in \cite{BT09}.

In this paper we consider the $K$-user MIMO IC, where each transmitter and receiver has multiple antennas, but the channel is constant over time and frequency. Similar in flavor to the situation with finite time or frequency diversity in \cite{BT09}, here we have a fixed amount of spatial diversity due to the multiple antenna elements, and the goal is to design the best communication strategy (in terms of degrees-of-freedom) for the system at hand.  
 Finding the maximum degrees-of-freedom available is equivalent to the problem of determining, for a fixed number of degrees-of-freedom $d$ per user, the region of $M$ and $N$ for which there exists a valid linear encoding and decoding strategy (as defined in Subsection~\ref{ss:DOF}). This objective is accomplished in the following Theorem.

\begin{figure}[tb]
\begin{center}
\includegraphics[width=3.65in]{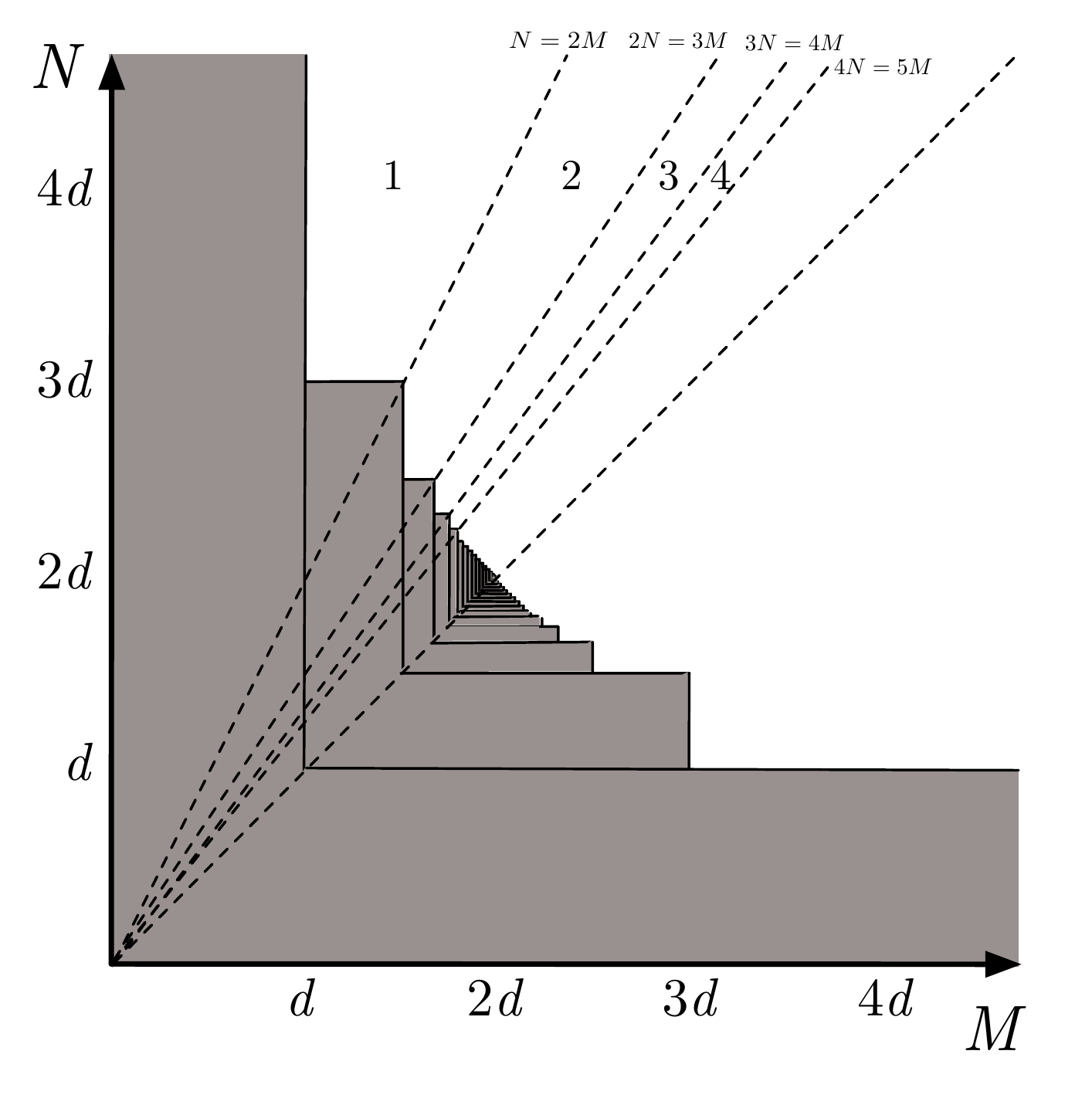} 
\vspace{-0.9cm}
\caption{For a fixed value of $d$, the feasible region in the $M,N$ plane is white while the infeasible region is shaded. 
The labels $1,2,3,4,...$ indicate the maximum length of alignment paths for $M,N$ in the corresponding region.} 
\label{f:feasibleRegion}
\end{center}
\vspace{-.6cm}
\end{figure}

\begin{theorem}[Symmetric degrees-of-freedom] \label{t:SymDOF}
Consider the three-user Gaussian MIMO interference channel with no time or frequency diversity. Fix the number of desired transmit dimensions $d_i=d$, transmit antennas $M_i=M$, and receive antennas $N_i=N$. Assume without loss of generality that $N\geq M$. Then alignment is feasible if and only if
\begin{equation}\label{e:feasibility} 
(2r+1)d\leq \max(rN,(r+1)M),\quad \text{for all integers } r\geq 0\,.
\end{equation}
\end{theorem}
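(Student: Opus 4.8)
The plan is first to strip the degrees-of-freedom question down to interference nulling. Writing $W_i=\mathrm{col}(V_i)\subseteq\C^M$ and using $N\geq M$, a zero-forcing receiver $U_i$ of dimension $d$ annihilating the interference exists exactly when that interference fits into $U_i^{\perp}$; so for generic channels the conditions $U_i^{*}H_{ij}V_j=0$ ($j\neq i$) together with $\mathrm{rank}(U_i^{*}H_{ii}V_i)=d$ are equivalent to the existence of $d$-dimensional subspaces $W_1,W_2,W_3$ with
\[
\dim\bigl(H_{ij}W_j+H_{ik}W_k\bigr)\leq N-d,\qquad \{j,k\}=\{1,2,3\}\setminus\{i\},
\]
for each receiver $i$. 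The direct-link condition holds on a dense open set once the interference constraints are met, and $d\le M$ (the $r=0$ case of \eqref{e:feasibility}) is needed merely to transmit $d$ streams. Since $H_{ij}$ is generically injective on $W_j$, this says the two interference images must overlap in dimension at least $p:=(3d-N)^{+}$ at every receiver. For $N\geq 3d$ this forces nothing, which is the trivial ``path length $1$'' region; as $N$ drops, $p$ grows and longer alignment is compelled.

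\textbf{Alignment maps and paths.} I would then make the forced overlaps into linear maps. At receiver $i$ the overlap provides a $\geq p$-dimensional subspace $S\subseteq W_j$ with $H_{ij}S\subseteq H_{ik}W_k$, and composing with $(H_{ik}|_{W_k})^{-1}$ gives a partial map $\phi_i\colon S\to W_k$. Cyclically these maps link the three transmit subspaces, an \emph{alignment path} is a chain of vectors each carried to the next by some $\phi_i$, and one trip around the cycle yields an endomorphism whose eigenvectors are the ``fully aligned'' directions --- recovering the Cadambe--Jafar eigenvector construction \cite{CJ08} in the square case $M=N$. The largest number of links a path can have before it must close up or exhaust the ambient dimension is the path length $L$ of Figure~\ref{f:feasibleRegion}.

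\textbf{Converse.} The heart is a dimension count along a maximal path. Fixing $r$, I would track the subspace produced by $r$ successive alignment maps: it carries $(2r+1)d$ signal dimensions in all, yet must thread $r+1$ transmit spaces (total room $(r+1)M$) while projecting through $r$ receivers (total room $rN$). If $(2r+1)d$ exceeded \emph{both} $rN$ and $(r+1)M$, the path would overflow every embedding, contradicting feasibility; hence $(2r+1)d\leq\max(rN,(r+1)M)$, with the $\max$ reflecting that nature may route the path on whichever side, transmit or receive, is roomier. I expect this to be the main obstacle: the forced overlaps are only guaranteed to \emph{exist}, not to sit compatibly inside each $W_i$, so they cannot be naively concatenated. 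Genericity of the $H_{ij}$ must be invoked to show that any configuration evading the bound would force a sum or intersection of generically placed subspaces to have impossible dimension, and keeping this generic bookkeeping honest across $2r+1$ subspaces related by generic maps is the real work.

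\textbf{Achievability.} For $(M,N,d)$ satisfying every inequality, let $r^{\star}$ be the binding index --- equivalently the maximal path length the region permits. I would build the $W_i$ explicitly: solve the linear systems $H_{ij}a=H_{ik}b$ to manufacture aligned pairs, chain them into paths of length at most $r^{\star}$, use eigenvectors of the composite cyclic map to realize the wrap-around alignments when $r^{\star}$ saturates, and complete each $W_i$ to dimension $d$ with generic vectors. One then checks that this makes $\dim(H_{ij}W_j+H_{ik}W_k)=N-d$ at each receiver --- the inequality for $r^{\star}$ is exactly what guarantees the aligned and generic pieces fit without collision --- and that the direct links remain full rank, again generically. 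Since this bookkeeping mirrors the converse count, the same path-length quantity certifies both achievability inside the region and impossibility just outside it.
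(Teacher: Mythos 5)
Your outline reproduces the paper's geometric intuition (alignment paths, the eigenvector construction for $M=N$, kernel-manufactured aligned vectors), but at both decisive points you defer the argument, and what is deferred is precisely the paper's one key lemma. For the converse, you reason about chains of forced overlaps and assert that a path carrying $(2r+1)d$ dimensions would ``overflow every embedding.'' This cannot be completed as stated: a converse must apply to \emph{every} feasible strategy, and an arbitrary feasible $(U_i,V_i)$ need not organize its interference into alignment paths at all --- the overlaps at the three receivers need not sit compatibly, as you yourself concede. The paper avoids paths entirely in its converse. It forms the $rN\times (r+1)M$ staircase matrix $\bA_r$ whose $i$th block row contains $\bH{i,}{+}$ and $\bH{i,}{-}$, and proves (Lemma~\ref{l:AfullRank}) that $\bA_r$ has full rank for generic channels, by specializing every $\bH{i,}{+}$ to the first $M$ columns of $I_N$ and every $\bH{i,}{-}$ to the last $M$ columns, then inducting on $(r,M,N)$ after a row/column permutation that isolates an identity block. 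Given that lemma, the converse is two lines: the orthogonality conditions \eqref{e:Generalorthogonality} force $\VV=V_1\times\cdots\times V_r$ to annihilate $\bA_r\UU$ where $\UU=U_2\times\cdots\times U_{r+2}$; if $rN\geq (r+1)M$ then $\bA_r$ is injective and $rd+(r+1)d\leq rN$, while if $(r+1)M\geq rN$ the same argument applied to $\bA_r\herm$ gives $(2r+1)d\leq (r+1)M$. The $\max$ in \eqref{e:feasibility} thus arises from a duality between the tall and wide cases of one generic full-rank statement, not from ``nature routing the path on the roomier side''; without a full-rank lemma of this kind your converse remains an unproven genericity claim --- exactly the ``real work'' you flag but do not do.

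The same lemma is also the missing piece in your achievability sketch. Your plan is close in spirit to the paper's: there, $W_i:=\ker \bA_r^i$ is precisely the space of alignment paths of length $r+1$, $U_j$ is the sum of the coordinate projections $W_{i,j}$, and $V_j$ is an orthogonal complement (with length-$r$ paths and one-dimensional remainders added in the non-critical cases of the appendix). But your closing step ``one then checks that the pieces fit without collision'' is the crux, not a check: the paper's Lemma~\ref{l:UVdimension} proves the $W_{i,j}$ are independent (hence $\dim U_j=d$) by showing that any linear dependence would concatenate into an alignment path of length greater than $r+1$, i.e., a nonzero vector in $\ker \bA_{r+1}^{i_s}$, which is impossible because that kernel has dimension $M-2d<0$ by the full-rank lemma. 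Note also that your proposal to ``complete each $W_i$ to dimension $d$ with generic vectors'' fails in the critical case $N+M=4d$ with $(2d-M)\mid d$: there the paper shows the alignment solution is \emph{unique}, so there is no generic slack to fill. In short, your proposal is a faithful description of the geometry, but the proof of the theorem lives in Lemma~\ref{l:AfullRank}, which both directions invoke and which your outline leaves as an acknowledged hole.
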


\subsection{Explanation of result}
Before providing rigorous arguments, in this subsection we describe the geometry underlying the problem. Relevant definitions are provided in subsequent sections. 

A given vector $u_i$ in the signal space of transmitter $i$ is said to initiate an alignment path of length $r+1$ if there exists a sequence of vectors $u_{i+1},u_{i+2},\dots,u_{i+r}\in \C^M$, such that $\bH {i-1,}{i} u_i = \bH{i-1,}{i+1}u_{i+1}, \dots, \bH {i+r-2,}{i+r-1} u_{i+r-1} = \bH{i+r-2,}{i+r} u_{i+r}$. Here channel indices are interpreted modulo~3. For example, a vector $u_2$ at transmitter 2 initiating an alignment path of length~3 means that there exist vectors $u_3$ and $u_1$ such that $\bH12 u_2 = \bH13 u_3$ and $\bH23 u_3 = \bH21 u_1$.

The feasible region of Figure~\ref{f:feasibleRegion} is divided up into sub-regions labeled with the maximum length of an alignment path; this number depends on $M$ and $N$ through the incidence geometry of the images of the channel matrices~$\im(\bH ij)$.
We begin by examining sub-region~1, and then look at how things generalize to the other sub-regions. 

The point of departure is the obvious constraint $d\leq M$ in order to have a $d$-dimensional subspace of an $M$ dimensional vector space. Continuing, assuming $M\geq d$, suppose $ 2M\leq N$, so $(M,N)$ lies in sub-region~1 of Figure~\ref{f:feasibleRegion}. At receiver one, the images $\im(\bH12)$ and $\im(\bH13)$ of the channels from transmitters two and three are in general position and therefore their intersection has dimension $[2M-N]^+=0$; in other words, \emph{alignment is impossible} in sub-region~1. Figure~\ref{f:NoAlignment} shows pictorially that because alignment is not possible here, we have the constraint $3d\leq N$. Mathematically, we see that alignment is not possible because the map from $\C^{2M}$ to $\C^N$ given by the matrix $\mat{\bH12 &\bH13 }$ is injective. 
\begin{figure}[bt]
\begin{center}
\includegraphics[width=3.5in]{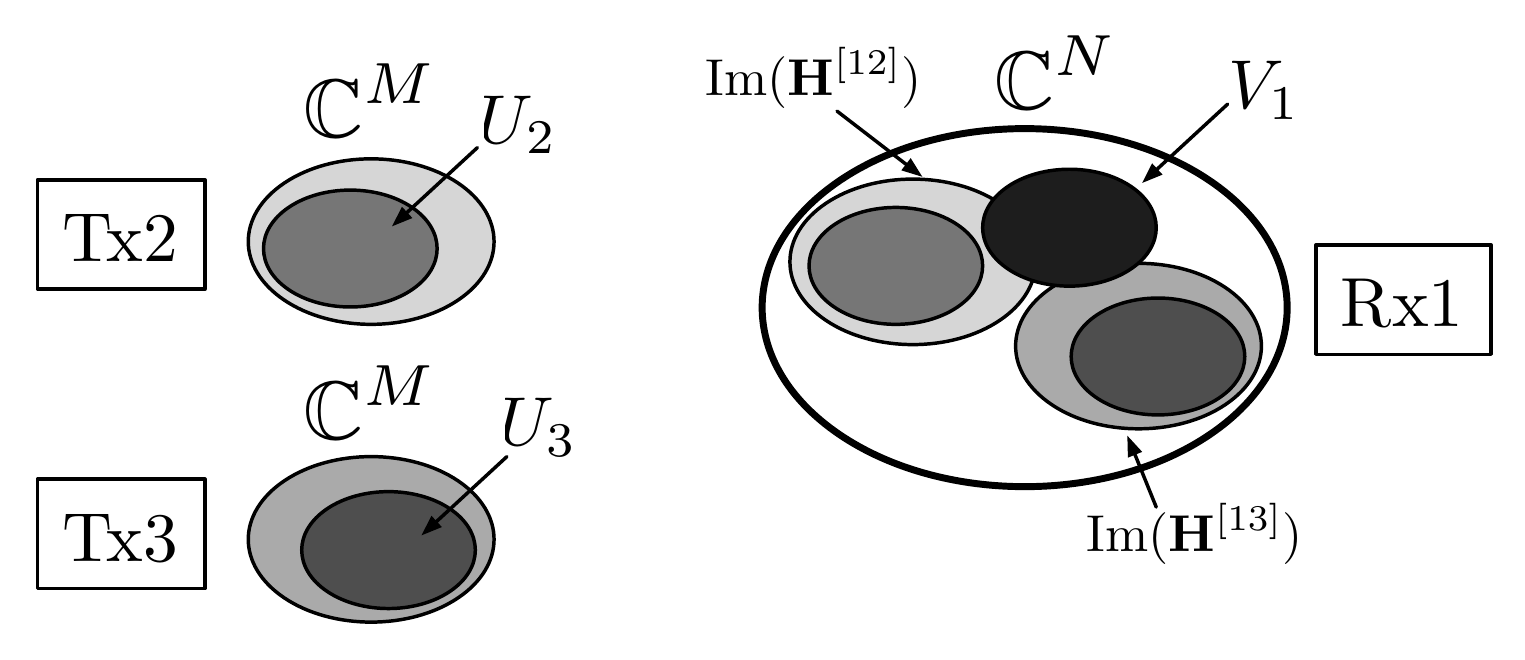}
\caption{Sub-region~1: The figure indicates that no alignment is possible when $2M\leq N$, since $\im(\bH12)$ and $\im(\bH13)$ are complementary. Since the three subspaces $V_1, \bH12 U_2,\bH13 U_3$ are each of dimension $d$, complementary, and lie in $\C^N$ at receiver 1, we obtain the constraint $3d\leq N$.  } 
\label{f:NoAlignment}
\end{center}
\vspace{-.3cm}
\end{figure}

Moving onward to sub-region~2, we have $2M> N$ and thus alignment \emph{is} possible. This means that alignment paths of length 2 are possible (Fig~\ref{f:AlignmentPath2}), with up to $2M-N$ interference dimensions overlapping at each receiver. 
Thus, the interference space $\bH12 U_2 + \bH13 U_3$ at receiver one occupies at least $2d-(2M-N)$ dimensions, and we have the constraint $3d\leq 2M$.
However, because $3M\leq 2N$, no vector at (say) transmitter three can be \emph{simultaneously} aligned at both receivers one and two, as indicated in Figure~\ref{f:NoSimulAlignment}. One can also see that no simultaneous alignment is possible by changing change perspective to that of a combined receiver one and two. 
 One may check directly that (as a special case of Lemma~\ref{l:AfullRank}), the map 
\begin{equation}\label{e:Aexample}
\mat{\bH12 & \bH13 & \\ & \bH23 & \bH21}
\end{equation}
from the three transmitters to $\C^{2N}$ is injective; analogously to the case in sub-region 1, this is interpreted to mean that no alignment is possible in the combined receive space $\C^{2N}$ (see Fig.~\ref{f:NoSimulAlignmentBound}). Thus, five complementary $d$-dimensional  subspaces lie in $\C^{2N}$ and we obtain the constraint $5d\leq 2N$. 

As far as achievability goes, the basic rule-of-thumb is to create alignment paths of maximum length. Thus, in sub-region~2, where alignment is possible, the achievable strategy aligns as many vectors as possible and the remaining ones (if $d>2(2M-N)$) are not aligned.

\begin{figure}[tb]
\begin{center}
\includegraphics[width=3.5in]{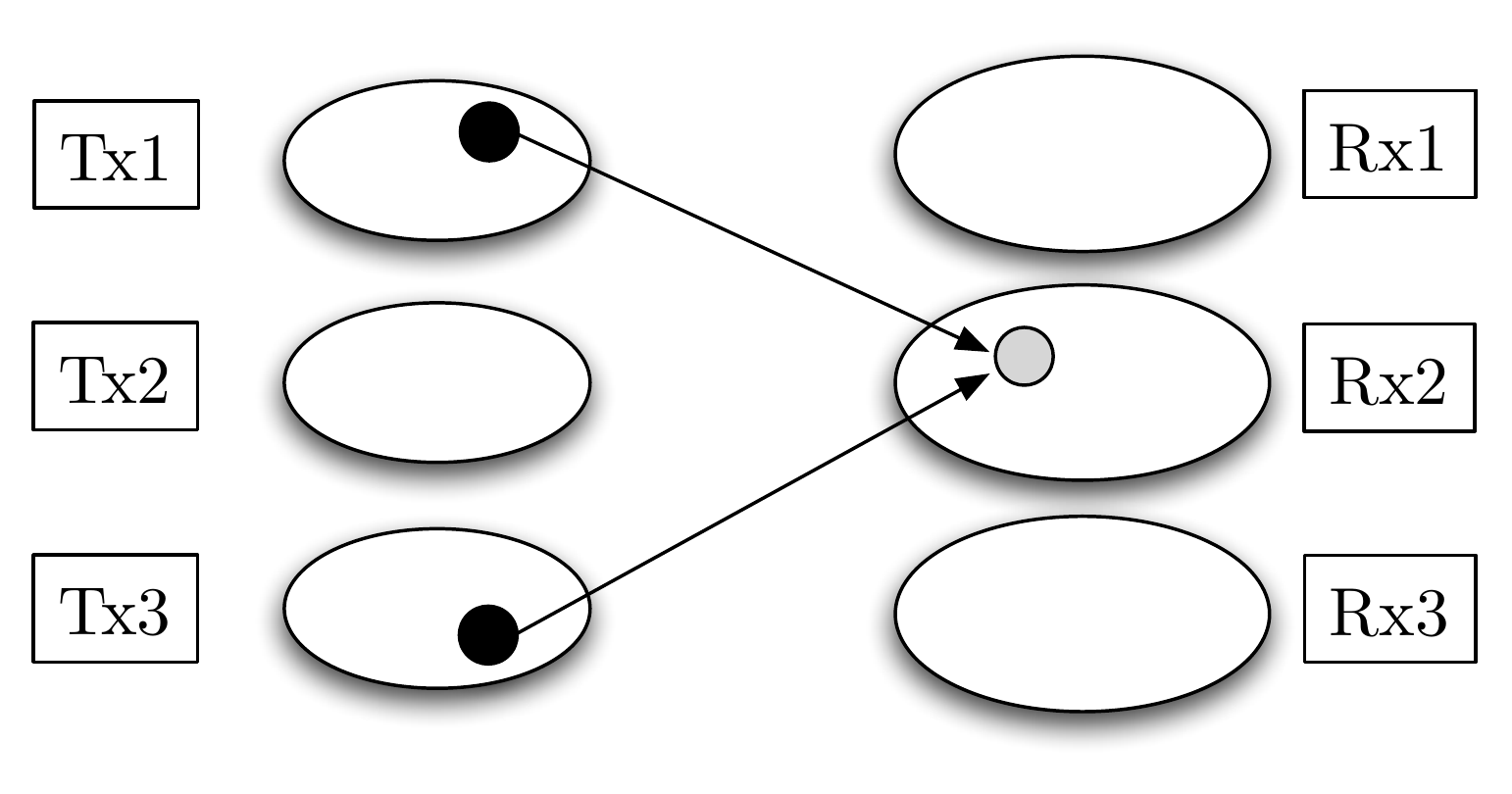}
\caption{Sub-region~2: Alignment is possible here. The figure denotes an alignment path of length~2.} 
\label{f:AlignmentPath2}
\end{center}
\end{figure}

\begin{figure}[tb]
\begin{center}
\includegraphics[width=3.5in]{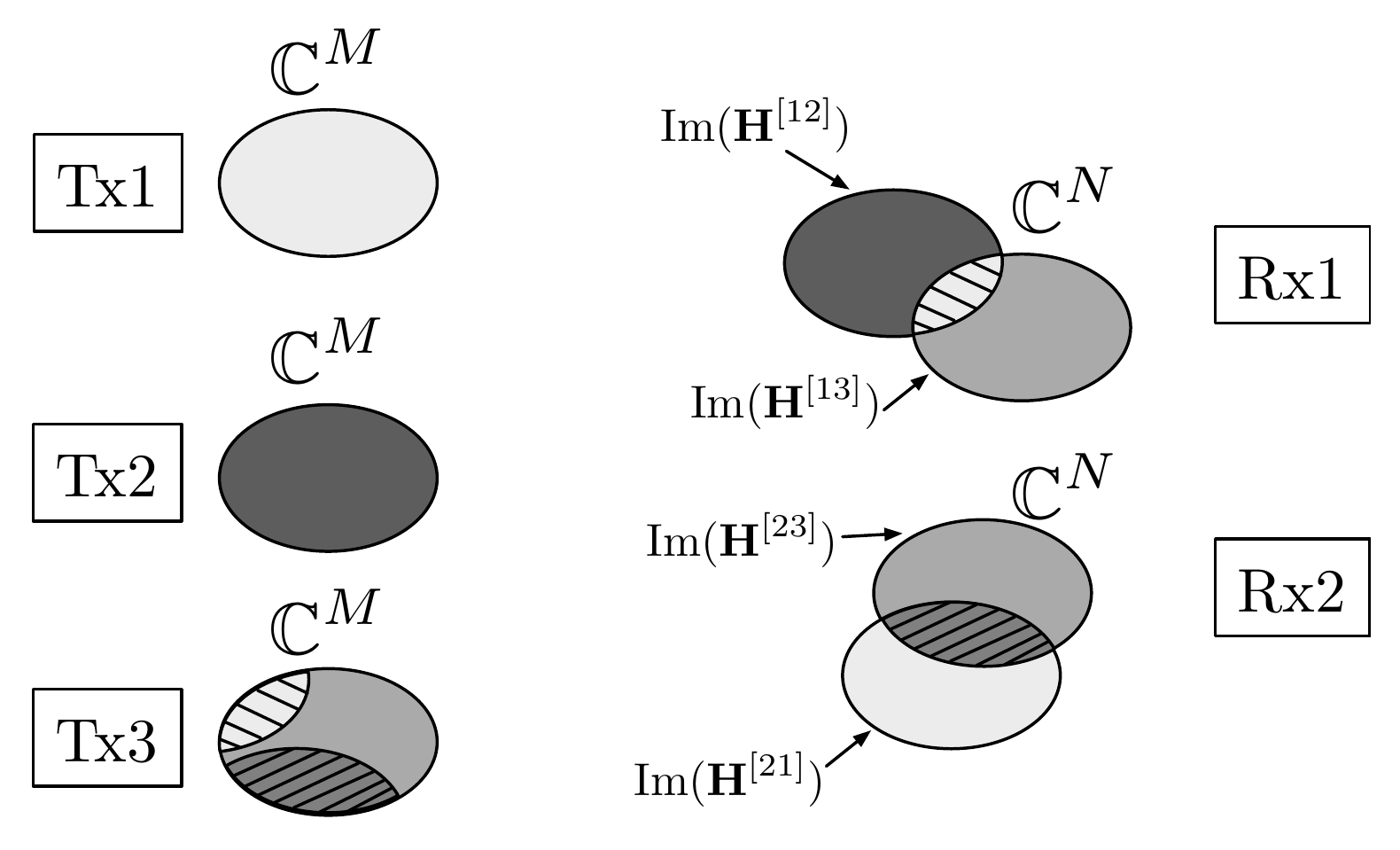}
\caption{Sub-region~2: The striped regions at receivers one and two each denote the dimension $2M-N$ portion of the space in which alignment can occur. From transmitter three's perspective, one sees that  \emph{simultaneous} alignment is not possible for $2(2M-N)\leq M$, or equivalently, $3M\leq 2N$. } 
\label{f:NoSimulAlignment}
\end{center}
\vspace{-.3cm}
\end{figure}

\begin{figure}[tb]
\begin{center}
\includegraphics[width=3.5in]{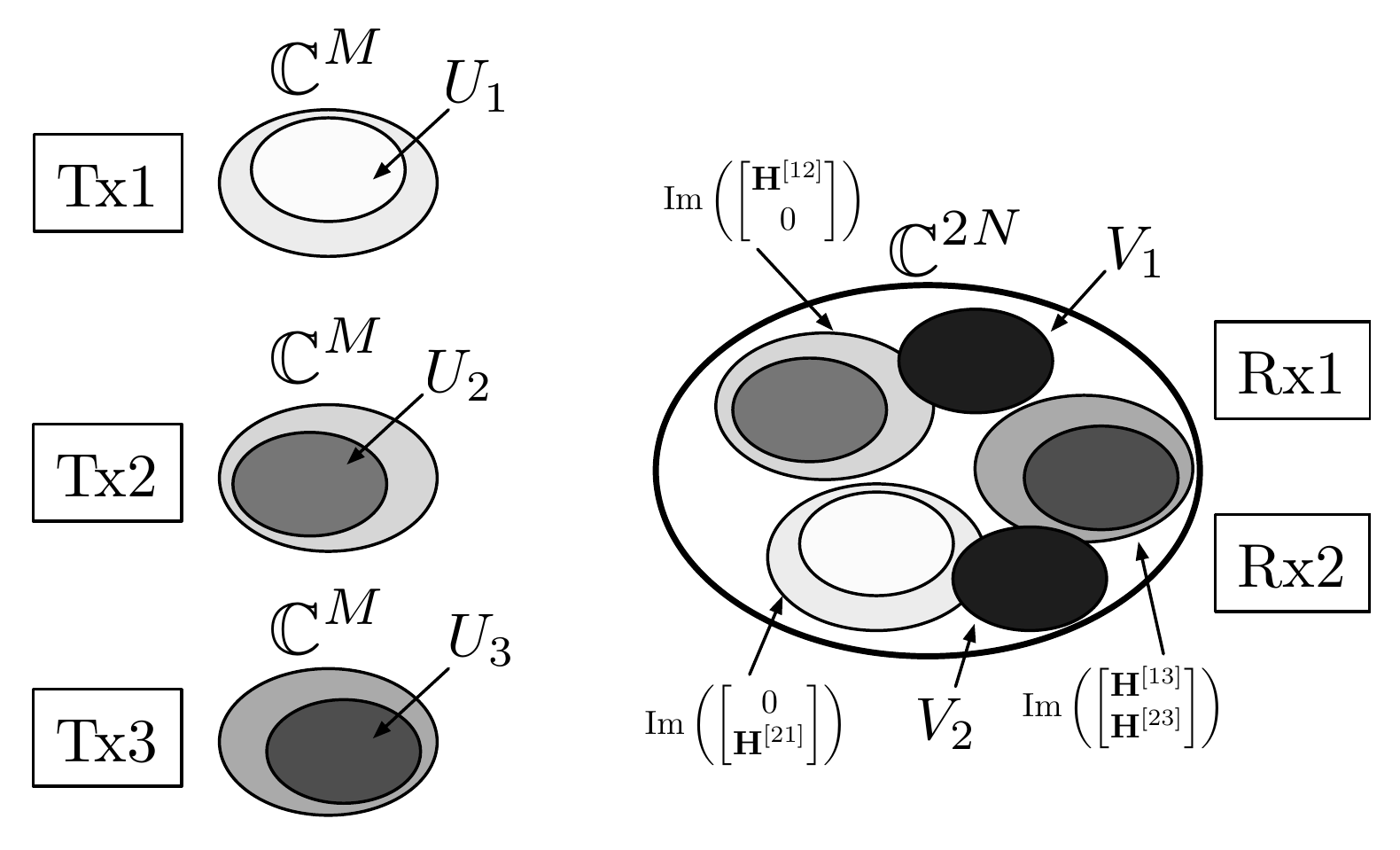}
\caption{Sub-region~2: Considering the dimension $2N$ receive space formed by receivers one and two together, along with the map defined \eqref{e:Aexample} from the three transmitters, shows that no alignment is possible in this combined space. Since there are five complementary subspaces of dimension $d$ we obtain the constraint $5d\leq 2N$. }
\label{f:NoSimulAlignmentBound}
\end{center}
\vspace{-0.5cm}
\end{figure}

Both the necessary conditions and achievability arguments extend in a natural way. On the achievability end, alignment paths of maximum length are used. For example, in sub-region~4, alignment paths of length four are used (Fig~\ref{f:AlignmentPath4}). For the converse, a generalization of the matrix in \eqref{e:Aexample} is shown to be full-rank in Lemma~\ref{l:AfullRank}, giving the constraints in \eqref{e:feasibility}.
\begin{figure}[tb]
\begin{center}
\includegraphics[width=3.5in]{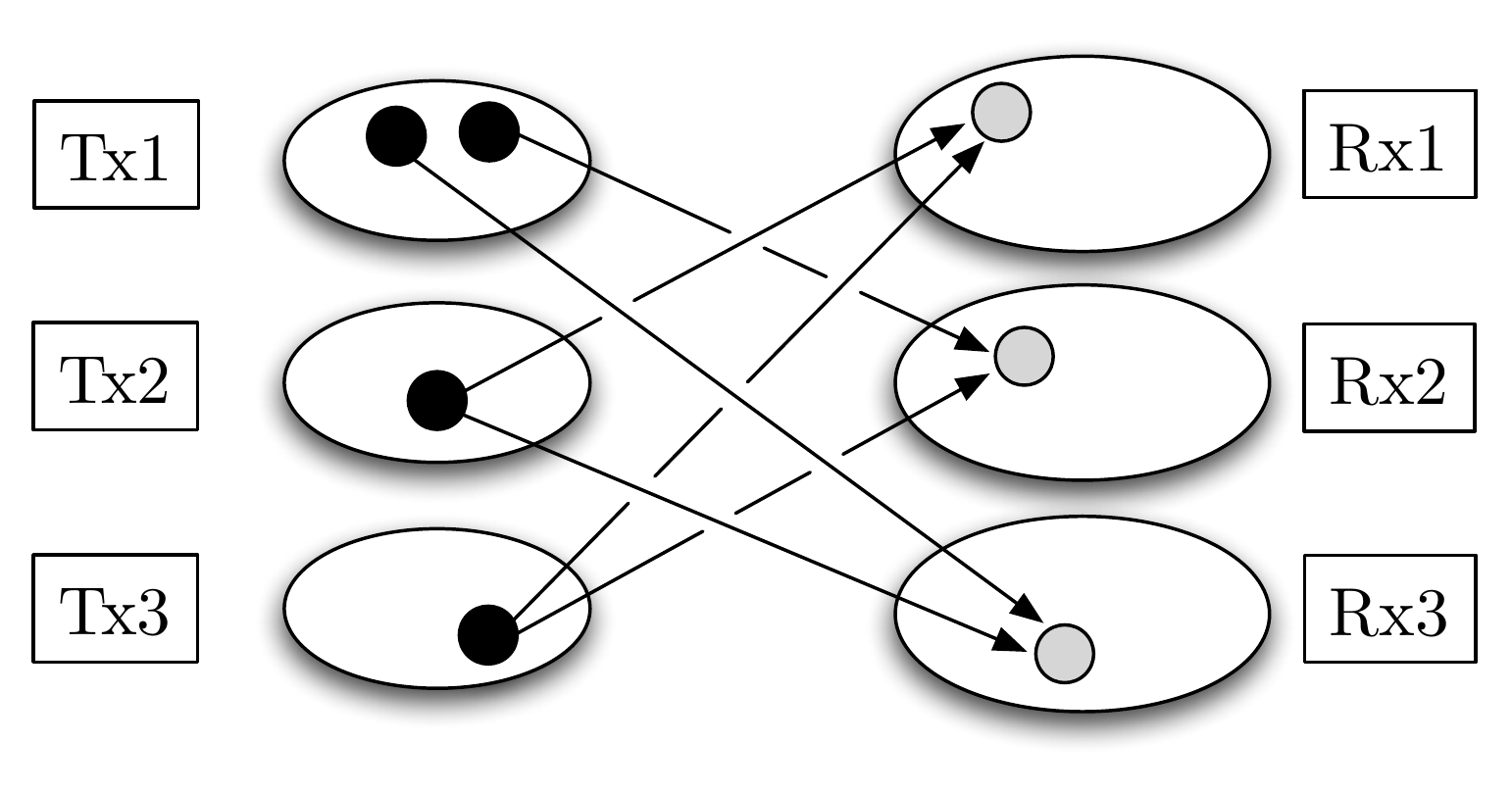}
\caption{Sub-region~4: Alignment paths of length four are denoted here, initiated by vectors at transmitter~1.} 
\label{f:AlignmentPath4}
\end{center}
\vspace{-0.5cm}
\end{figure}

\subsection{Related work}

The problem we consider, of maximizing degrees-of-freedom using linear strategies for the $K$-user MIMO IC with finite number of transmit and receive antennas, has received significant attention in the last several years. Jafar and Fakhereddin \cite{JF07} determined the degrees of freedom of the two-user MIMO IC with an arbitrary number of antennas at each of the four terminals. 
Cadambe and Jafar \cite{CJ08} considered the problem for $K=3$ users and $N=2$ antennas, and showed that $3/2$ dof was achievable. For more than $3$ users or $N>2$ they assumed infinite time or frequency diversity and applied their main $K/2$ result. Gomadam et al. \cite{GCJ08,GCJ11}, posed the problem of determining feasibility of linear alignment in the constant channel setting, but left the problem unanswered and proposed a heuristic iterative numerical algorithm.

Also at the heuristic level, Yetis et al. \cite{YGJK10} proposed to determine feasibility of alignment by counting the number of equations and comparing to the number of variables. This approach was carried out rigorously to show a necessary and sufficient condition in \cite{BCT11} for the symmetric square case of $N=M$ antennas at all $K$ transmitters and receivers and by Razaviyayn et al. \cite{RGL11} for the case that the number of transmit dimensions $d$ divides both the number of transmit and receive antennas. Several other works have subsequently pursued a similar approach for related problems, including~\cite{ZBH11}, \cite{GG11} (both heuristic), and \cite{AEV11}. 


The remaining work on linear alignment for the MIMO IC has focused on heuristic algorithms, mainly
iterative in nature (see~\cite{GCJ11}, \cite{PH09}, \cite{RSL10}, \cite{PD10}, and \cite{SGHP10}). Some have proofs of convergence, but no performance guarantees are known.

We emphasize that in this paper we restrict attention to vector space interference alignment,
where the effect of finite channel diversity can be observed. Interfering signals can also be aligned on the signal scale
using lattice codes (first proposed in~\cite{BPT10}, see also \cite{CJS09}, \cite{EO09}, \cite{MGMK09}), however the understanding of this type of alignment is currently at the stage corresponding to infinite time or frequency diversity in the vector space setting. In other words, essentially ``perfect" alignment is possible due to the infinite channel precision available at infinite signal-to-noise ratios. 

Ghasemi et al. \cite{GMK10} apply alignment on the signal scale to the $K$-user $M\times N$ MIMO IC. The converse arguments in that paper are obtained by forming a two-user interference channel with two users transmitting and decoding jointly; thus they obtain the inequality $3d\leq \max(N,2M)$ corresponding to $r=1$ in \eqref{e:feasibility} of the present paper. 

In exactly the same setting as the present paper, Amir et al. \cite{AEN11} have independently proposed a similar achievable strategy for critical $M,N$ satisfying both \eqref{e:feasibility} and $M+N=4d$. \cite{AEN11} is limited to critical values of $M,N$ and contains no converse arguments beyond the equation counting bound of \cite{BCT11} and \cite{RGL11}.

Finally, also independently, Wang et al. \cite{WGJ11} very recently posted a paper to the Arxiv containing many similar results. Their converse is information theoretic and, unlike ours, is not limited to linear strategies.

\subsection{Interference channel model}
The $K$-user MIMO interference channel has $K$ transmitters and $K$ receivers, with transmitter $i$ having $M_i$ antennas and receiver $i$ having $N_i$ antennas. For $i=1,\dots,K$, receiver $i$ wishes to obtain a message from the corresponding transmitter $i$. The remaining signals from transmitters $j\neq i$ are undesired interference.
The channel is assumed to be constant over time, and at each time-step the input-output relationship is given by
\begin{equation}
  \y_i=\bH{i}{i}\x_i+\sum_{1\leq j\leq K\atop j\neq i}\bH{i}{j}\x_j+\z_i\,,\quad 1\leq i\leq K\,.
\end{equation}
Here for each user $i$ we have $\x_i\in \C^{M_i}$ and $\y_i, \z_i\in \C^{N_i}$, with $\x_i$ the transmitted signal, $\y_i$ the received signal,  and $\z_i\sim \CN(0,I_{N_i})$ is additive isotropic white Gaussian noise. The channel matrices are given by $\bH{i}{j}\in \C^{N_i\times M_j}$ for $1\leq i,j\leq K$, with each entry assumed to be independent and with a continuous distribution. We note that this last assumption on independence can be weakened significantly to a basic non-degeneracy condition but we will not pursue this here. For our purposes this means the channel matrices are generic. Each user has an average power constraint, $E(||\x_i||^2)\leq P$.

\subsection{Vector space strategies and degrees-of-freedom}\label{ss:DOF}
We restrict the class of coding strategies to (linear) \emph{vector space} strategies.
In this context degrees-of-freedom (dof) has a simple interpretation as the dimensions of the transmit subspaces, described in the next paragraph. However, note that one can more generally define the degrees-of-freedom region in terms of an appropriate high transmit-power limit $P\to \infty$ of the Shannon capacity region $C(P)$ normalized by $\log P$ (\cite{CJ08}, \cite{MMK08}). In that general framework, it is well-known and easy to show that vector space strategies give a concrete non-optimal achievable strategy with
rates $$R_i(P)=d_i\log(P)+O(1), \quad 1\leq i\leq K\,.$$ Here $d_i$ is the dimension of transmitter $i$'s subspace and $P$ is the transmit power.

The transmitters encode their data using vector space precoding.
Suppose transmitter $j$ wishes to transmit a vector $\hat x_j \in \mathbb
C^{d_j}$ of
$d_j$ data symbols. These data symbols are modulated on the subspace
$U_j\subseteq \C^{M_j}$ of dimension~$d_j$, giving the input signal
$\x_j= \bU_j\hat x_j$, where $ \bU_j$ is a $M_j \times d_j$ matrix whose
column span is $U_j$. The signal~$\x_j$ is received by receiver $i$ through the channel
as $\bH{i}{j} \bU_j\hat x_j$ 
The dimension of the transmit
space, $d_j$, determines the number of data streams, or degrees-of-freedom,
available to transmitter $j$. With this restriction of strategies, the output is
given by
\begin{equation}
  \y_i=\bH{i}{i} \bU_i\hat x_i+\sum_{1\leq j\leq K\atop j\neq i}\bH{i}{j}\tilde \bU_j\hat \x_j+\z_i\,,\quad 1\leq i\leq K\,.
\end{equation}
The desired signal space at receiver $i$ is thus $\bH{i}{i}U_i$, while the
interference space is given by $\sum_{j\neq i}\bH{i}{j}U_j$, i.e.\ the span of the undesired subspaces as observed by receiver~$i$.

In the regime of asymptotically high transmit powers, in order that decoding can be accomplished we impose the constraint at each receiver~$i$ that the desired signal space $\bH{i}{i}U_i$ is complementary to the interference space $\sum_{j\neq i}\bH{i}{j}U_j$.  Equivalently, there must exist subspaces $V_i$ with $\dim V_i=
\dim U_i$ such that
\begin{equation}\label{e:Generalorthogonality}
 \bH{i}{j}U_j \perp V_i\,, \quad 1\leq i,j \leq K, \quad i\neq j\,,
\end{equation}
 and
\begin{equation}\label{e:projCond}
\dim(\text{Proj}_{V_i}\bH{i}{i}U_i)=\dim U_i\,.
\end{equation}
Here $\bH{i}{j}U_j\perp V_i$ is interpreted to mean that $V_i$ belongs to the dual space $(\C^{N_i})^*$ and $V_i$ annihilates $\bH{i}{j}U_j$. Alternatively, $( \bV_i)^\dagger \bH{i}{j} \bU_j=0$, where $\bV^\dagger$ denotes the Hermitian transpose of $\bV$ and $ \bV_i$ is a matrix with column span equal to $V_i$. Note that implicitly the transmit dimensions are assumed to satisfy the obvious inequality $d_i\leq \min(M_i,N_i)$.
 If each direct channel matrix $\bH{i}{i}$ has generic (or i.i.d. continuously distributed) entries, then the second condition~\eqref{e:projCond} is satisfied assuming $\dim V_i=d_i$ for each $i$ (this can be easily justified---see \cite{GCJ08} for some brief remarks). Hence we focus on condition \eqref{e:Generalorthogonality}.

The goal is to maximize degrees of freedom, i.e.
choose subspaces $U_1,\dots,U_K$, $V_1,\dots,V_K$ with $d_i\leq\min(M_i,N_i)$ in order to
\begin{align*}
&\text{maximize}\quad d_1+d_2+\dots+d_K \\
&\text{subject to}\quad  \bH{i}{j}U_j \perp V_i\,, \quad 1\leq i,j \leq K, \quad i\neq j\,,
\end{align*}
To this end, it is sufficient to answer the following feasibility question: given number of users $K$, number of antennas $M_1,\dots,M_K$, $N_1,\dots,N_K$, and desired transmit subspace dimensions $d_1,\dots,d_K$, does there exist a choice of subspaces $U_1,\dots,U_K$ and $V_1,\dots,V_K$ with $\dim U_i=\dim V_i=d_i,1\leq i\leq K$, satisfying \eqref{e:Generalorthogonality}?

\comments{
One requirement for achievability is that the number of parameters must be less
than the dimension of the constraints imposed by \eqref{e:Generalorthogonality}. See \cite{BCT11}
and~\cite{RGL11}, where this argument is carried out to obtain necessary
conditions on the feasibility of interference alignment. In this paper, we
obtain for the first time constraints which cannot be obtained from such
dimension counts.
}

For the rest of the paper, we fix $K=3$, and we introduce two notational
conveniences for this case.  First, we interpret the indices modulo three, so that $\bH12 = \bH42$ and so
on. Second, since the indices can always be chosen to differ by exactly
one, we will adopt the shorthand $\bH{i,}{+}$ and $\bH{i,}{-}$ for
$\bH{i,}{i+1}$ and $\bH{i,}{i-1}$ respectively.


\section{Proof of converse}

We begin with a key lemma.
\begin{lemma}\label{l:AfullRank}
Suppose $N\geq M$. For any $r\geq 1$ define the $rN\times (r+1)M$ block matrix 
\begin{equation}\label{e:big-matrix}
\bA_r=\mat{
\bH12 & \bH13 & & & &
\\
 & \bH23 & \bH21 & & &
 \\
& & \bH31 & \bH32 & &
\\
& & & \ddots & &
\\
& & & & \bH{r,}{+} & \bH{r,}{-}
}\,,
\end{equation}
where the indices are interpreted as described above. For generic channel matrices $\bH
ij$, the matrix $\bA_r$ has full rank. 
\end{lemma}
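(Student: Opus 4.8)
The plan is to identify $\ker\bA_r$ with the space of length-$(r+1)$ alignment paths and to compute its dimension exactly. A vector $(u_1,\dots,u_{r+1})\in\C^{(r+1)M}$ lies in $\ker\bA_r$ precisely when $\bH{k,}{+}u_k+\bH{k,}{-}u_{k+1}=0$ for $k=1,\dots,r$, which is exactly the alignment-path relation. Since $N\ge M$ and the channels are generic, each superdiagonal block $\bH{k,}{-}$ is an injective $N\times M$ map; hence, given $u_k$, the relation determines $u_{k+1}$ uniquely whenever $\bH{k,}{+}u_k\in\im\bH{k,}{-}$, and has no solution otherwise. Consequently a path is completely determined by $u_1$, and $\dim\ker\bA_r=\dim Q_r$, where $Q_k\subseteq\C^M$ denotes the space of initial vectors that propagate through the first $k$ relations.

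First I would set up the recursion for $\dim Q_k$. Writing $s=N-M$ and letting $X_k\subseteq\C^N$ be the image under $\bH{k,}{+}$ of the propagated vector as $u_1$ ranges over $Q_{k-1}$ (a subspace of dimension $\dim Q_{k-1}$, since every map in sight is injective), the $k$-th relation is solvable exactly on the preimage of $X_k\cap\im\bH{k,}{-}$, and that preimage has the same dimension as the intersection. If the intersection is transverse, then $\dim(X_k\cap\im\bH{k,}{-})=[\dim Q_{k-1}+M-N]^+=[\dim Q_{k-1}-s]^+$, so $\dim Q_k=[\dim Q_{k-1}-s]^+$. With $\dim Q_0=M$ this gives $\dim Q_r=[M-rs]^+=[(r+1)M-rN]^+$, whence $\operatorname{rank}\bA_r=(r+1)M-[(r+1)M-rN]^+=\min(rN,(r+1)M)$, i.e.\ full rank.

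The step I expect to be the main obstacle is the transversality claim. It is not a consequence of ``two independent generic subspaces meet generically,'' because the period-$3$ structure forces $\bH{k,}{-}=\bH{k-3,}{-}$, and the latter already enters the construction of $X_k$; likewise $\bH{k,}{+}=\bH{k-3,}{+}$ appears both inside the propagation and at the current step. So $X_k$ and $\im\bH{k,}{-}$ are genuinely entangled and no naive independence argument applies. I would circumvent this by a semicontinuity argument: each $\rho\times\rho$ minor of $\bA_r$ is a polynomial in the entries of the six matrices $\bH12,\bH13,\bH23,\bH21,\bH31,\bH32$, so $\{\operatorname{rank}\bA_r\ge\rho\}$ is Zariski open, and it suffices to exhibit a single specialization attaining $\rho=\min(rN,(r+1)M)$. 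For this I would set every diagonal block to $P=\mat{I_M \\ 0}$ and every superdiagonal block to $S=\mat{0 \\ I_M}$ (both $N\times M$). Then $\bH{k,}{+}u_k\in\im\bH{k,}{-}$ reduces to the vanishing of the first $s$ coordinates of $u_k$, and the induced map $u_k\mapsto u_{k+1}$ is the coordinate shift dropping those $s$ (already zero) entries and padding $s$ zeros at the bottom. A direct induction then gives $Q_k=\{u_1:\text{first }ks\text{ coordinates vanish}\}$, so $\dim Q_k=[M-ks]^+$ exactly, realizing the transverse count. Because all diagonal blocks are equal and all superdiagonal blocks are equal, this specialization respects (indeed is far more degenerate than) the period-$3$ repetition, so the entanglement is harmless; semicontinuity then promotes full rank to generic channels.
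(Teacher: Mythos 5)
Your proof is correct, and its backbone coincides with the paper's: both arguments invoke Zariski-openness of the maximal-rank locus (minors are polynomials in the channel entries) to reduce to a single specialization, and your specialization is \emph{identical} to the paper's, namely $\bH{1,}{+}=\bH{2,}{+}=\bH{3,}{+}=B=\left(\begin{smallmatrix}I_M\\ \mathbf 0\end{smallmatrix}\right)$ and $\bH{1,}{-}=\bH{2,}{-}=\bH{3,}{-}=C=\left(\begin{smallmatrix}\mathbf 0\\ I_M\end{smallmatrix}\right)$; your $P$ and $S$ are the paper's $B$ and $C$, and you rightly check that assigning one value per channel matrix is consistent with the repeated occurrences of $\bH{i}{j}$ inside $\bA_r$ (this is the point where your honestly-flagged transversality worry is genuine --- the period-$3$ entanglement does defeat a naive ``generic subspaces meet transversally'' argument, which is exactly why both you and the paper route through a specialization). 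Where you genuinely differ is in \emph{verifying} full rank at the specialization. The paper permutes rows and columns to expose an identity submatrix of size $M+r(N-M)$ in the lower-right corner, clears the off-diagonal blocks $\tilde C$ and $B'$ by elementary operations, and runs a simultaneous induction that reduces $\bA_r$ with parameters $(M,N)$ to $\bA_{r-1}$ with parameters $\bigl(M-(N-M),\,N-(N-M)\bigr)$, with base cases $r=0$ and $N\geq 2M$. You instead compute $\ker\bA_r$ at the specialization in closed form: since $C$ is injective, a kernel element is determined by its first block $u_1$, the relation $Bu_k=-Cu_{k+1}$ is solvable iff the first $N-M$ coordinates of $u_k$ vanish, the induced map is a shift, and hence solvability through $k$ steps is exactly the vanishing of the first $k(N-M)$ coordinates of $u_1$, giving $\dim\ker\bA_r=[(r+1)M-rN]^+$ and thus $\operatorname{rank}\bA_r=\min(rN,(r+1)M)$ (your edge cases $N=M$ and $N\geq 2M$ fall out of the same formula, absorbing the paper's two base cases). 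This buys you two things: no nested induction on $(r,M,N)$ and no block bookkeeping, and an exact kernel dimension at the specialization that directly mirrors the alignment-path/shift picture the paper exploits later (e.g.\ the dimension count $\dim W_i=(r+1)M-rN$ for $\ker\bA_r^i$ in Proposition~\ref{p:critical} follows immediately from your computation rather than as a corollary of full rank). The trade-off is that the paper's reduction technique is self-contained matrix manipulation, while your argument leans on the injectivity-propagation identification of the kernel --- but you justify that identification, so there is no gap.
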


\begin{proof}
In order to prove that $\bA_r$ has full rank for generic channel matrices, it is
sufficient to prove that it does for one particular set of matrices (see e.g. \cite{Hartshorne}).
We specialize to the matrices
\begin{equation*}
B := \bH{1,}{+} = \bH{2,}{+} = \bH{3,}{+}
= \begin{pmatrix} I_{M} \\ \mathbf 0 \end{pmatrix}
\end{equation*}
and 
\begin{equation*} C := \bH{1,}{-} = \bH{2,}{-} = \bH{3,}{-}
= \begin{pmatrix} \mathbf 0 \\ I_{M} \end{pmatrix},
\end{equation*}
where $I_M$ denotes the $M\times M$ identity matrix and the $\mathbf 0$ denotes
a block of $0$s of size $(N-M) \times M$.

We will prove that, with these specializations, $\bA_n$ has full rank by
simultaneous induction on~$r$, $N$, and~$M$. If $r=0$, then $\bA_r$ is a $0
\times M$ matrix, which trivially has full rank. If $N \geq 2M$, then every row
vector is a unit vector and all such unit vectors appear in some row, so the
matrix has full rank.

Now we suppose that $N < 2M$.
We permute the rows and columns of $\bA_n$ as follows. We extract the first
block of $N$ rows, followed by the last $N-M$ rows of each of the subsequent
$r-1$ blocks. We put these rows last, after the remaining rows, each in
their induced order. Similarly, we take the first block of $M$~columns, followed
by the last $N-M$ columns of each of the other $r$ column blocks, and place
these to the right of all the other columns. This creates a square matrix of
size $M + r(N-M)$ in the lower right, and we will prove that this submatrix is
the identity.

If we divide $B$ and~$C$ into blocks by separating off the last $N-M$ rows and
columns of each, then using our assumption that $N-M < M$, we get
\begin{equation*}
B = \begin{pmatrix} \tilde B & B' \\ \mathbf 0 & \mathbf 0 \end{pmatrix}
\quad
C = \begin{pmatrix} \tilde C & \mathbf 0 \\ \mathbf 0 & I_{N-M} \end{pmatrix}\,,
\end{equation*}
where
\begin{equation*}B' = \begin{pmatrix} \mathbf 0 \\ I_{N-M} \end{pmatrix}
\quad
\tilde B = \begin{pmatrix} I_{2M-N} \\ \mathbf 0 \end{pmatrix}
\quad
\tilde C = \begin{pmatrix} \mathbf 0 \\ I_{2M-N} \end{pmatrix}\,.
\end{equation*}
Therefore, the rearranged matrix has the form
\begin{equation*}
\begin{pmatrix}
\tilde B&\tilde C&        &        &        &B'  \\
        &\ddots  &\ddots  &        &        &        & \ddots \\
        &        &\tilde B&\tilde C&        &        &       & B' \\
\tilde C&        &        &        & I_{M}  &        &  \\
        &        &        &        &\mathbf0&I_{N-M} &  \\
        &        &        &        &        &\mathbf0&I_{N-M}& \\
        &        &        &        &        &        &\ddots & \ddots \\ 
        &        &        &        &        &        &       &\mathbf 0 & I_{N-M} \\
\end{pmatrix}.
\end{equation*}
We can use the central $I_M$, together with elementary column operations to
clear the $\tilde C$ on the left. Similarly, elementary row operations can use
the diagonal $I_{N-M}$ to clear the $B'$s in the upper right. The only remaining
non-zero entries are in the $(M+r(N-M)) \times (M+r(N-M))$ identity matrix in
the lower right and the upper left block, with the copies of~$\tilde B$
and~$\tilde C$.
The latter matrix is just our specialized version of~$\bA_{r-1}$ with parameters
$M$ and~$N$ each decreased by $N-M$, and this matrix has full rank by the
inductive hypothesis.
\end{proof}

\comments{
An analogous lemma also holds for time-extended channels.

\begin{lemma}\label{l:AxTfullRank}
Suppose $N\geq M$ and let $n$ and $T$ be a positive integers. Define the $nNT\times (n+1)MT$ block matrix 
$$
\bAb_n=\mat{
\bHb12 & \bHb13 & & & &
\\
 & \bHb23 & \bHb21 & & &
 \\
& & \bHb31 & \bHb32 & &
\\
& & & \ddots & &
\\
& & & & \bHb{n,}{+} & \bHb{n,}{-}
}\,,
$$
where the indices are interpreted modulo $3$. Here $\bHb ij= \bH ij\otimes \bI_T$. For generic channel matrices $\bH ij$, the matrix $\bA_n$ has full rank. 
\end{lemma}
\begin{proof}
By rearranging rows and columns of $\bAb_n$ one may obtain the matrix $\bA_n\otimes \bI_T$,  which by  Lemma~\ref{l:AfullRank} implies that $\bAb_n$ generically has full rank. \end{proof}
}

The following proposition uses the preceding lemma to prove a new set of constraints.

\begin{proposition}[Converse]\label{p:necessary}
Suppose $N\geq M$ are integers. Fix the degrees of freedom per user $d_i=d$ and number of antennas $M_i=M,N_i=N$.  Alignment is feasible only if
$$
(2r+1)d\leq \max(rN,(r+1)M),\quad \text{for all } r\geq 0\,.
$$
\end{proposition}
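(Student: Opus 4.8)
The plan is to fix an integer $r\ge 1$ (the case $r=0$ reduces to the obvious bound $d\le M$, already contained in $d\le\min(M,N)$) and to extract a single dimension inequality living inside the combined receive space $\C^{rN}=\bigoplus_{k=1}^{r}\C^{N}$ obtained by stacking $r$ receivers indexed cyclically by $i_k\equiv k \pmod 3$. Assuming alignment is feasible, there are subspaces $U_1,U_2,U_3$ and $V_1,V_2,V_3$, each of dimension $d$, satisfying \eqref{e:Generalorthogonality} and \eqref{e:projCond}. The only consequence of feasibility I will use is the blockwise decodability fact: at each receiver the desired space $D_i:=\bH{i}{i}U_i$ meets the interference space $J_i:=\sum_{j\ne i}\bH{i}{j}U_j$ trivially. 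This holds because \eqref{e:Generalorthogonality} gives $J_i\subseteq V_i^{\perp}$, while \eqref{e:projCond} is exactly the statement $D_i\cap V_i^{\perp}=0$.

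Inside $\C^{rN}$ I will exhibit two families of subspaces and show that their sum is direct. The first is the desired part $D:=\bigoplus_{k=1}^{r}D_{i_k}$, placed blockwise, of dimension $rd$. The second is the combined interference $I:=\bA_r(U^{*})$, where $U^{*}:=\bigoplus_{k=1}^{r+1}U_{t_k}$ is the \emph{external} direct sum over the transmitters $t_k\equiv k+1\pmod 3$ indexing the columns of $\bA_r$, so that repeated transmitters are counted as independent copies and $\dim U^{*}=(r+1)d$. A short index check shows that block $k$ of $\bA_r x$ equals $\bH{i_k}{i_k+1}x^{(k)}+\bH{i_k}{i_k-1}x^{(k+1)}$ with $x^{(k)}\in U_{i_k+1}$ and $x^{(k+1)}\in U_{i_k-1}$, hence lies in $J_{i_k}$; thus $I$ is blockwise contained in the interference spaces. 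Consequently any $v\in D\cap I$ has its $k$-th block in $D_{i_k}\cap J_{i_k}=0$ for every $k$, so $v=0$ and the sum $D\oplus I$ is direct.

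The final step is the bookkeeping. Since $D\oplus I\subseteq\C^{rN}$ we get $rd+\dim I\le rN$. On the other hand $\dim I=(r+1)d-\dim(\ker\bA_r\cap U^{*})\ge (r+1)d-\dim\ker\bA_r$, and here I invoke Lemma~\ref{l:AfullRank}: full rank of the $rN\times(r+1)M$ matrix $\bA_r$ gives $\dim\ker\bA_r=(r+1)M-\min(rN,(r+1)M)=[(r+1)M-rN]^{+}$. Chaining the two bounds yields $(2r+1)d\le rN+[(r+1)M-rN]^{+}=\max(rN,(r+1)M)$, which is the claim.

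I expect the main obstacle to be this last piece of bookkeeping rather than any hard estimate: a single direct-sum argument must produce \emph{both} branches of the maximum at once, and the slack between $rN$ and the true bound is supplied precisely by $\dim\ker\bA_r$, which is nonzero exactly when $(r+1)M>rN$. Collapsing the two regimes into the one quantity $[(r+1)M-rN]^{+}$ is where Lemma~\ref{l:AfullRank} does the real work. A secondary point requiring care is the treatment of repeated transmitters in $U^{*}$: one must verify that passing to the external direct sum (which may enlarge the physically realizable interference) is harmless, since the proof uses only blockwise containment of $I$ in the $J_{i_k}$ and the lower bound $\dim I\ge (r+1)d-\dim\ker\bA_r$, both of which survive the enlargement.
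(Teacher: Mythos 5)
Your proof is correct, but it takes a genuinely different route from the paper's. The paper works with the dual decoding spaces: it sets $\UU = U_2\times U_3\times\cdots\times U_{r+2}$ and $\VV = V_1\times\cdots\times V_r$, observes that \eqref{e:Generalorthogonality} forces $\VV$ to annihilate $\bA_r\UU$ blockwise, and then splits into two cases --- when $rN\geq (r+1)M$ it uses injectivity of $\bA_r$ (Lemma~\ref{l:AfullRank}) to get $rd+(r+1)d\leq rN$, and when $(r+1)M\geq rN$ it runs the dual argument through the Hermitian transpose $\bA_r\herm\colon(\C^N)^r\to(\C^M)^{r+1}$, injective by the same lemma, to get $(2r+1)d\leq(r+1)M$. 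You instead stay entirely on the receive side, replace $\VV$ by the desired signal spaces $D_{i_k}=\bH{i_k}{i_k}U_{i_k}$, prove the sum $D\oplus\bA_r(U^*)$ is direct, and absorb both branches of the maximum into the single identity $\dim\ker\bA_r=[(r+1)M-rN]^+$, thereby avoiding both the case split and the transpose; this in effect makes rigorous the ``five complementary $d$-dimensional subspaces in $\C^{2N}$'' picture the paper only sketches in its introduction. What each buys: yours is uniform and geometrically transparent; the paper's is shorter and logically slightly stronger, since it invokes only the alignment condition \eqref{e:Generalorthogonality}, whereas your direct-sum step needs $D_i\cap J_i=0$ and $\dim D_i=d$, i.e., the decodability condition \eqref{e:projCond}. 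That is harmless for the proposition as stated --- \eqref{e:projCond} is part of the paper's definition of a valid strategy and holds generically in the direct channels --- but note that your argument, unlike the paper's, would not by itself rule out configurations satisfying \eqref{e:Generalorthogonality} alone with $U_i$ chosen adversarially against $\bH{i}{i}$. Your treatment of repeated transmitters via the external direct sum $U^*$ is exactly the same device as the paper's product $\UU$, so no issue arises there.
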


\begin{remark}
Proposition~\ref{p:necessary} remains valid when allowing constant channel time extension, with $M,N,$ and $d$ appropriately normalized by the time extension value.
\end{remark}
\begin{proof} 
We fix the value of $r\geq0$, and omit dependence on $r$ whenever convenient.
Define the product of transmit spaces  $\UU=U_2\times U_3\times \dots \times
U_{r+2}\subset(\C^M)^{r+1}$, where as usual indices are interpreted modulo~$3$,
and similarly let $\VV=V_1\times\dots V_r\subset (\C^N)^r$. Note that each $U_i$
and $V_i$ has dimension $d$, so $\UU$ and~$\VV$ have dimensions~$(r+1)d$
and~$rd$ respectively.

First, suppose that $rN \geq (r+1)M$. Then Lemma~\ref{l:AfullRank} implies that
the linear map $\bA_r\colon (\C^M)^{r+1}\to(\C^N)^r$ is injective. By the
orthogonality condition~\eqref{e:Generalorthogonality},
we obtain $\VV\perp \bA_r \UU$, and thus $rd+(r+1)d=\dim \VV +
 \dim(\bA_r\UU) \leq\dim (\C^N)^r = rN$.

On the other hand, if $(r+1)M\geq rN$, the Hermitian transpose $\bA_r\herm$ is
an injective linear map $\bA_r\herm\colon (\C^N)^r\to(\C^M)^{r+1}$. Again, the
orthogonality conditions~\eqref{e:Generalorthogonality} imply that $\bA_r\herm\VV
\perp \UU$ so $(2r+1)d\leq (r+1)M$. This proves the lemma.
\end{proof}

Note that when $r=0$ Proposition~\ref{p:necessary} reduces to the obvious
constraint $d \leq M$ in order to have a $d$-dimensional subspace of an
$M$-dimensional vector space. In fact, the proposition and its proof can be
considered generalizations of this observation, with the inequality arising from
the fact that the vector spaces $\VV + \bA_r\UU$ or $\bA_r\herm \VV + \UU$ must be
contained in $(\C^N)^r$ and $(\C^M)^r$ respectively.


\section{Proof of Achievability}

\begin{theorem}[Achievability]\label{t:achievability}
Fix any $M,N,$ and $d$ satisfying \eqref{e:feasibility}. Then alignment is feasible, i.e. there exists a choice of subspaces $U_1,U_2,U_3, V_1,V_2,V_3$ with $\dim U_i=\dim V_i=d$, for $1\leq i\leq 3,$ and $V_i\perp \bH ij U_j$ for~$1\leq i\neq j\leq 3$. 
\end{theorem}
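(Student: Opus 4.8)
The plan is to reduce the statement to a purely geometric construction of the transmit subspaces, and then to build these from \emph{alignment chains}, which turn out to be exactly the kernel vectors of the matrices $\bA_r$ used in the converse. First I would reduce the problem: it suffices to produce subspaces $U_1,U_2,U_3\subseteq\C^M$ with $\dim U_j=d$ such that the interference space at each receiver has dimension at most $N-d$, that is $\dim\!\big(\bH{k,}{+}U_{k+1}+\bH{k,}{-}U_{k-1}\big)\le N-d$ for $k=1,2,3$. Given such $U_j$, a generic $d$-dimensional $V_k$ lies in the ($\ge d$)-dimensional annihilator of the interference, and for generic direct channels $\bH kk$ the projection condition \eqref{e:projCond} then holds automatically, so \eqref{e:Generalorthogonality} is satisfied. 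Since $\bH{k,}{\pm}$ is generically injective (as $M\le N$), the bound is equivalent to demanding that the overlap $o_k:=\dim\!\big(\bH{k,}{+}U_{k+1}\cap\bH{k,}{-}U_{k-1}\big)$ be at least $3d-N$ at every receiver. If $2M\le N$ then $\im\bH{k,}{+}\cap\im\bH{k,}{-}=\{0\}$, no overlap is possible, and the construction is three generic subspaces subject only to $3d\le N$---exactly the content of \eqref{e:feasibility} in this regime. So I may assume $2M>N$.

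Next I would create the required overlap using chains. A chain of length $L$ is a tuple $(u_t,\dots,u_{t+L-1})$ of transmit vectors (indices mod $3$) obeying the $L-1$ consecutive alignment equations $\bH{i-1,}{+}u_i=\bH{i-1,}{-}u_{i+1}$; equivalently it is a nonzero vector in the kernel of a cyclic shift of $\bA_{L-1}$. By Lemma~\ref{l:AfullRank} this matrix has full rank, so the chains of length $L$ from a fixed starting transmitter form a space of dimension $[LM-(L-1)N]^+$. I would assemble the $U_j$ from a family of chains closed under cyclic rotation of the starting transmitter, letting the chain vectors that land at transmitter $j$ span $U_j$. A length-$L$ chain contributes $L$ to the total transmit dimension $\sum_j\dim U_j$ and produces $L-1$ aligned pairs, i.e. $L-1$ units of overlap spread over $L-1$ consecutive receivers.

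The counting then matches \eqref{e:feasibility} term by term. Put $r_0=\lfloor M/(N-M)\rfloor$, so that the longest available chains have length $r_0+1$ and the two tightest instances of \eqref{e:feasibility} are $(2r_0+1)d\le(r_0+1)M$ and $(2r_0+3)d\le(r_0+1)N$. Setting the target overlap at each receiver to the minimum $o=3d-N$ (if $3d<N$ no alignment is needed) forces a rotation-balanced family of $3(N-2d)$ chains of total length $3d$, hence of mean length $d/(N-2d)$; the inequality $(2r_0+3)d\le(r_0+1)N$ is precisely the statement that this mean is at most $r_0+1$, so a decomposition into lengths $1,\dots,r_0+1$ exists. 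The inequalities for small $r$ bound the \emph{supply}: at each transmitter the vectors that can be aligned over a stretch of $\ell$ consecutive receivers form a subspace of dimension $[(\ell+1)M-\ell N]^+=\dim\ker\bA_\ell$, and matching these supplies against the chain demands reproduces the constraint at index $\ell$ (for instance $\ell=1$ gives the pairwise cap $o\le 2M-N$, equivalent to $3d\le 2M$, and $\ell=2$ gives $5d\le 3M$). Feasibility of this demand-versus-supply packing is thus equivalent to the full list \eqref{e:feasibility}; integrality gaps are absorbed by working with rotation-closed families (all counts multiples of three) and shifting a bounded number of chains between adjacent lengths.

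The hard part will be the genericity step: proving that for generic channels a configuration with the prescribed chain structure actually attains the counted dimensions---that the chain vectors are linearly independent at each transmitter, so $\dim U_j=d$, and that the $o$ aligned pairs at each receiver are independent inside $\im\bH{k,}{+}\cap\im\bH{k,}{-}$, so the overlap is genuinely $\ge 3d-N$ with no accidental collapse. I would establish this exactly as in Lemma~\ref{l:AfullRank}: exhibit one explicit channel (the block matrices $B,C$ used there are a natural candidate) for which the chains can be written down and the required independences checked directly, and then invoke upper-semicontinuity of rank to transfer the conclusion to generic channels. Finally I would dispose of the boundary cases---$N=M$, where all chain lengths are available and the construction degenerates to closed eigenvector loops with feasibility $2d\le N$; and $M/(N-M)\in\Z$, where the longest chains have length $r_0$ and the same counting applies with $r_0$ shifted by one.
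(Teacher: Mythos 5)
Your sketch follows essentially the same route as the paper: the paper also realizes alignment paths as kernel vectors of shifted matrices $\bA_r^i$ (full rank by Lemma~\ref{l:AfullRank}), builds each $U_j$ as the sum of block projections $W_{i,j}$ of these kernels, takes $V_j$ to be the orthogonal complement of the interference sum (which collapses because aligned segments share images), handles $M=N=2d$ by the same closed eigenvector-loop device, and verifies $\dim V_j \geq d$ by exactly your subadditivity count. One structural difference: the paper does not need your general packing over chain lengths $1,\dots,r_0+1$ with a mean-length argument. It fixes the single $r$ with $rN<(r+1)M$ and $(r+1)N\geq(r+2)M$ and uses only chains of length $r+1$, supplemented (when $d>(r+1)[(r+1)M-rN]$) by chains of length $r$ drawn from $\ker\bA_{r-1}^i\setminus\pi(W_i)$, plus one-dimensional leftover chains to absorb divisibility remainders; correspondingly, only the two boundary inequalities \eqref{e:N1} and \eqref{e:M1} are ever invoked, the constraints at smaller indices being implied rather than matched one by one. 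Your packing formulation is consistent with this but is heavier machinery than the proof requires.

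The substantive divergence is in the step you correctly flag as hard: $\dim U_j = d$, i.e., independence of the chain contributions at each transmitter. Your plan---specialize to the explicit channels $B,C$ of Lemma~\ref{l:AfullRank} and transfer by semicontinuity---is not carried out and has a technical wrinkle: $\dim U_j$ is not the rank of a matrix whose entries are polynomial in the channel coefficients (it involves projections of kernels), so one must first restrict to the open locus where $\dim\ker\bA_r^i$ is constant and represent $U_j$ algebraically there, and then verify the independences by hand at the rather degenerate specialization where all cross channels coincide. The paper closes this step with a cleaner idea that your sketch is missing (Lemma~\ref{l:UVdimension}): a putative linear dependency among the subspaces $W_{i,1}$ can be spliced, using the chains extending each participating vector forward and the chain preceding the dependent vector backward, into a single alignment path of length strictly greater than $r+1$; stacking its first $r+2$ vectors yields a nonzero element of $\ker \bA_{r+1}^{i_s}$, contradicting Lemma~\ref{l:AfullRank} since $\left[(r+2)M-(r+1)N\right]^+ = 0$ in this regime. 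Thus no new genericity argument is needed beyond the one full-rank lemma, applied at index $r+1$. Your outline is sound in architecture, but as written the key independence claim rests on an unexecuted and delicate plan where the paper has a two-line reduction to the lemma you already have.
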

\begin{proof}
The proof for the critical points satisfying $N+M=4d$ is given as part of Proposition~\ref{p:critical} below. The more general argument is similar, but tedious, and deferred to the appendix.   
\end{proof}

\begin{remark}
The achievable strategy specifies an explicit construction for the solutions in terms of the kernel of an appropriate matrix (or in terms of eigenvectors in the case $M=N$). This contrasts with the existence proofs for $K>3$ in \cite{BCT11} and \cite{RGL11}, which do not provide a way to find solutions.
\end{remark}

\begin{proposition}\label{p:critical}
Fix integers $d$ and $N\geq M$ satisfying $N+M=4d$. Then alignment is
feasible if and only if either $N=M = 2d$ or the integer $d$ is evenly divisible
by $2d-M = N-2d$.
\end{proposition}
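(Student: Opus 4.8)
The plan is to set $s := 2d - M = N - 2d \ge 0$, so that $M = 2d - s$, $N = 2d + s$, and the stated criterion becomes ``$s = 0$ or $s \mid d$''. I would first dispatch necessity using the converse already in hand. Assume $s > 0$ and $s \nmid d$, put $r_0 := \lfloor d/s \rfloor$, and write $d = r_0 s + t$ with $0 < t < s$. A one-line computation gives $(2r_0+1)d - r_0 N = t > 0$ and $(2r_0+1)d - (r_0+1)M = s - t > 0$, so $(2r_0+1)d > \max(r_0 N, (r_0+1)M)$, contradicting Proposition~\ref{p:necessary}; hence feasibility forces $s = 0$ or $s \mid d$. (Conversely, when $s = 0$ one has $\max(rN,(r+1)M) = 2(r+1)d \ge (2r+1)d$ for every $r$, and when $s \mid d$ each $r$ satisfies one of the two branches, so on the critical line divisibility is \emph{equivalent} to the full set of converse inequalities.)

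It remains to prove achievability when $s = 0$ or $s \mid d$. For the square case $s = 0$ (so $M = N = 2d$) I would use the classical cyclic construction: impose $\bH12 U_2 = \bH13 U_3$, $\bH23 U_3 = \bH21 U_1$, and $\bH31 U_1 = \bH32 U_2$; solving the first two expresses $U_2$ and $U_1$ in terms of $U_3$, and substituting into the third yields a self-map of $\C^{2d}$ whose span of any $d$ eigenvectors serves as $U_3$. The two interferers then coincide at each receiver, so the interference is $d$-dimensional and a $d$-dimensional $V_i$ exists.

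For the case $s \mid d$ with $s > 0$, write $d = ks$, so $M = (2k-1)s$ and $N = (2k+1)s$. The idea is to build alignment paths of length $k$ directly from the kernels of the matrices of Lemma~\ref{l:AfullRank}. With $r = k-1$ the matrix $\bA_{k-1}$ is $(k-1)N \times kM$ and $kM - (k-1)N = s > 0$, so by Lemma~\ref{l:AfullRank} it has full rank and $\dim \ker \bA_{k-1} = s$. A tuple $(u_2,\dots,u_{k+1})$ in this kernel satisfies $\bH{j-1,}{j} u_j = -\,\bH{j-1,}{j+1} u_{j+1}$ for each $j$, i.e. (up to sign) it is an alignment path of length $k$ based at transmitter $2$; cyclically relabelling the users produces two further copies, giving $s$ length-$k$ paths based at each transmitter and $3s$ in all. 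I would take $U_i$ to be the span of all path-vectors deposited at transmitter $i$, and choose $V_i$ orthogonal to the resulting interference.

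A bookkeeping argument then checks the dimensions. Each length-$k$ path occupies $k$ consecutive transmitters and imposes $k-1$ alignment conditions at $k-1$ consecutive receivers (indices mod $3$); summing over the three cyclic families, each transmitter collects $ks = d$ path-vectors and each receiver collects $(k-1)s$ overlap conditions. Hence the interference $\bH{i,}{+}U_{i+1} + \bH{i,}{-}U_{i-1}$ at receiver $i$ has dimension at most $2d - (k-1)s = (k+1)s = N - d$, leaving room for a $d$-dimensional $V_i$; condition~\eqref{e:projCond} holds for generic direct channels, so all of~\eqref{e:Generalorthogonality} is met. The main obstacle is that these counts are exact only under two genericity (rank) statements: that the path-vectors at each transmitter are independent, so $\dim U_i = d$ rather than smaller; and that the overlaps from distinct paths at each receiver are independent, so the intersection is exactly $(k-1)s$ and the interference is exactly $N-d$ rather than larger. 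As in Lemma~\ref{l:AfullRank}, it suffices to verify these ranks for a single convenient specialization of the $\bH{i}{j}$, and I expect the interleaving of the three cyclic families across transmitters and receivers to be the delicate part — which is presumably why the fully general achievability argument is relegated to the appendix.
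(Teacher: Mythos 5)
Your proposal follows essentially the same route as the paper, and two of its three parts are complete. The necessity argument is the paper's in cleaner clothing: with $s=2d-M$, your $r_0=\lfloor d/s\rfloor$ is precisely the paper's ``nearest integer to $M/(N-M)$'' (since $M/(N-M)=d/s-\tfrac12$), and with $d=r_0s+t$, $0<t<s$, your identities $(2r_0+1)d-r_0N=t$ and $(2r_0+1)d-(r_0+1)M=s-t$ replace the paper's computation with the error term $e$ of absolute value less than one half; this is a genuine simplification. The square case $M=N=2d$ is likewise the paper's construction: it composes the six channel matrices into a single map on $\C^{2d}$ and takes $d$ of its $2d$ generically distinct eigenvectors, working on the dual side (choosing $V_1$ and propagating the $U_i^\perp$) where you work on the primal side --- the same idea. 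For $s\mid d$, your construction also coincides with the paper's: with $r=k-1$, Lemma~\ref{l:AfullRank} gives $\dim\ker\bA_r^i=(r+1)M-rN=s$, and taking $U_j$ to be the span of the deposited path-vectors is exactly the paper's $U_j=\sum_{i=j-1}^{j-r-1}W_{i,j}$, with $V_j$ the orthocomplement of the $r+2$ aligned interference blocks.

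The gap is the step you yourself flag and defer: that the $k$ blocks deposited at each transmitter are linearly independent, so that $\dim U_j=d$ exactly. This is the real content of sufficiency, and your proposed fix --- verify the rank for one convenient specialization of the $\bH{i}{j}$, as in Lemma~\ref{l:AfullRank} --- is plausible in principle (the independence is an open condition and the kernel dimension is stable at that specialization) but is nontrivial and entirely unexecuted, precisely because the three kernels are not independent generic objects: they are all cut out by the same six channel matrices. The paper closes this point by a different device, with no new specialization (Lemma~\ref{l:UVdimension}): assume a \emph{minimal} linear dependency among vectors drawn from the distinct $W_{i,1}$; the chain equations defining the kernels let one splice the participating paths into a single alignment path of length strictly greater than $r+1$, i.e.\ a nonzero element of $\ker \bA_{r+1}^{i_s}$; but $\bA_{r+1}$ is full rank by Lemma~\ref{l:AfullRank} and $(r+2)M-(r+1)N=M-2d=-s<0$, so this kernel is trivial --- contradiction. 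By contrast, your second worry (independence of the overlaps at the receivers, so that the interference is ``exactly $N-d$'') needs no genericity at all: the collapse of the interference to at most $(k+1)s=N-d$ dimensions is an upper bound built into the construction, subadditivity then gives $\dim V_j\geq d$, and a larger $V_j$ is harmless since any $d$-dimensional subspace of it still satisfies \eqref{e:Generalorthogonality}. (One small elided point: each projection $W_{i,j}$ has full dimension $s$ because injectivity of the generic channel matrices propagates the vanishing of one block of a kernel element to the whole tuple; the paper notes this explicitly.)
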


\begin{proof}
The necessity follows by some manipulations of Proposition~\ref{p:necessary}. If
$N \neq M$ and $d/(2d-M)$ is not an integer, then we set $r$ to be the nearest
integer to $M/(N-M)$, which is well-defined because of the equality:
\begin{equation*}
\frac{M}{N-M} = \frac{d}{2d-M} - \frac{1}{2}.
\end{equation*}
Thus,
\begin{equation*}
r = \frac{M}{N-M} + e
\end{equation*}
where $e$ has absolute value strictly less than one half. Now, we get
\begin{align*}
(2r+1)d &= \frac{(N+M)^2}{4(N-M)} + \frac{e(N+M)}{2} \\
rN &= \frac{NM}{N-M} + e N \\
(r+1)M &= \frac{NM}{N-M} + e M.
\end{align*}
Which of the latter two is larger will depend on the sign of $e$. Assuming that
$e$ is positive, we can substitute and clear denominators to get that
\begin{equation*}
(2r+1)d \leq \max\{rN, (r+1)M\}
\end{equation*}
is equivalent to
\begin{align*}
0 &\geq (N+M)^2 + 2e(N+M)(N-M) \\&\quad - 4NM - 4eN(N-M) \\&= 
(N-M)^2 - 2e(N-M)^2,
\end{align*}
which will be false because $e$ is less than one half. The case when $e$ is
negative works similarly.

We now turn to the sufficiency part of the proof. Suppose that $2d-M$ is positive and evenly divides $d$.
We set $r = d/(2d-M) -1$, from which it follows that $M= d(2r+1)/(r+1)$ and
$N=d(2r+3)/(r+1)$.
For any integer $i$, we define shifted versions of the block matrix
from~(\ref{e:big-matrix}):
\begin{align*}
&\bA_r^i=\\&\mat{
\bH{i,}+ &\! \!\bH{i,}- & & & &
\\
 & \! \!\bH{i+1,}+ & \! \! \bH{i+1,}{-} & & &
 \\
& & & \!\!\ddots & &
\\
& & & & \!\!\bH{i+r-1,}{+} &\!\! \bH{i+r-1,}{-}
}
\end{align*}
By Lemma~\ref{l:AfullRank}, for generic channel matrices, $\bA_r^i$
has full rank.
Therefore, its kernel is a vector space of dimension $(r+1)M-rN = d/(r+1)$, and
we denote this vector space by $W_i$. For $i+1 \leq j \leq i+r+1$, define
$W_{i,j}$ to be the projection of $W_i$ onto the $(j-i)$th block of coordinates.
We claim that
\begin{equation}\label{e:subspaceConstruction}
\begin{split}U_j &= \sum_{i=j-1}^{j-r-1} W_{i,j}\,,
\\
V_j &= \left(\bH{j,}{-} W_{j,j+1}
+ \sum_{i=j}^{j-r} \bH{j,}+ W_{i,j+1} \right)^\perp\,,
\end{split}\end{equation}
constitutes a feasible strategy for interference alignment.
Before rigorously justifying this, we first do a naive dimension count to verify
that
$$
\dim U_j = (r+1) \dim W_{i,j} = d
$$
and
$$\dim V_j = N - (r+2)\dim W_{i,j} = \frac{2r+3}{r+1} d - \frac{r+2}{r+1}d = d.
$$

Any element of $W_{i}$ consists of $r+1$ vectors $x_{i,j} \in
\CC^M$ for $i + 1 \leq j \leq i+r+1$, and these vectors satisfy $\bH{j,}+
x_{i,j+1} =  - \bH{j,}- x_{i, j+2}$ for $i+1 \leq j \leq i+r$.
First, since the channel matrices are injective, the only way for a
subvector $x_{i,j}$ to be zero is for the whole vector to be zero, and thus each
projection $W_{i,j}$ has the full dimension $d/(r+1)$. Second,
these equations
explain the apparent asymmetry in the definition of $V_j$, which can
equivalently be defined as the complement of the sum over all applications of
$\bH{j,}-$ and $\bH{j,}+$ to appropriate vector spaces $W_*$, but such vector
spaces coincide.  Indeed, this is the essence of the construction.
From this observation, it follows that that $\bH{j,}{+}U_{j+1}$ and $\bH{j,}{-}
U_{j-1}$ are orthogonal to $V_j$, which is what is required to be feasible.

The only thing remaining to be checked is that $U_j$ and $V_j$ actually have the
expected dimensions. This is verified in Lemma~\ref{l:UVdimension} below.

Finally, we suppose that $M=N = 2d$. The channel matrices are square, and thus,
generically, they are invertible, so we can define
\begin{equation*}
B = \bH{1,}2 (\bH{3,}2)^{-1} \bH{3,}1 (\bH{2,}1)^{-1} \bH{2,}3 (\bH{1,}3)^{-1}.
\end{equation*}
Again, generically, this matrix will have $2d$ distinct eigenvectors, and we
choose $V_1$ to be the span of any $d$ of them. Then we set
\begin{align*}
U_3^\perp &= (\bH{1,}3)^{-1} V_1 \\
V_2 &= \bH{2,}3 U_3^\perp \\
U_1^\perp &= (\bH{2,}1)^{-1} V_2 \\
V_3 &= \bH{3,}1 U_1^\perp \\
U_2^\perp &= (\bH{3,}2)^{-1} V_3.
\end{align*}
These form a feasible strategy.
\end{proof}

Note that our constructions imply that the alignment solution is unique when
$2d-M$ divides $d$, but there exist $2d \choose d$ solutions when $N=M=2d$.

\begin{lemma}\label{l:UVdimension}
The subspaces $U_j$ and $V_j$ defined in \eqref{e:subspaceConstruction} have dimension $d$. 
\end{lemma}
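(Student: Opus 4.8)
The plan is to reduce both dimension claims to a single explicit computation via a genericity argument, exactly in the spirit of the proof of Lemma~\ref{l:AfullRank}. Note first that the bounds go only one way for free: since each projection $W_{i,j}$ has dimension $\tfrac{d}{r+1}$, we have $\dim U_j \le \sum_i \dim W_{i,j} = (r+1)\tfrac{d}{r+1} = d$, and the subspace whose orthogonal complement defines $V_j$ has dimension at most $(r+2)\tfrac{d}{r+1}$, so $\dim V_j \ge d$. Thus both claims are equivalent to saying that the relevant sums of subspaces are \emph{direct} (attain their maximal possible dimension). On the Zariski-open set where all the matrices $\bA_r^i$ have full rank, the kernels $W_i$ and their projections $W_{i,j}$ vary algebraically with the channel entries, and directness of a sum is the non-vanishing of an appropriate minor built from spanning vectors; hence directness holds on a Zariski-open set of channels. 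Therefore it suffices to verify it for \emph{one} convenient choice of channel matrices.

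I would reuse the specialization of Lemma~\ref{l:AfullRank}, setting every $\bH{i,}{+}=B=\binom{I_M}{\mathbf 0}$ and every $\bH{i,}{-}=C=\binom{\mathbf 0}{I_M}$, with the zero block of size $(N-M)\times M$. This is admissible precisely because Lemma~\ref{l:AfullRank} guarantees each $\bA_r^i$ has full rank here, so the kernels $W_i$ keep their dimension $\tfrac{d}{r+1}$ and the projections $W_{i,j}$ stay isomorphic to $W_i$. The next step is to read off the kernel explicitly: writing a kernel element of $\bA_r^i$ in blocks $(x_1,\dots,x_{r+1})\in(\C^M)^{r+1}$, the relations $Bx_k=-Cx_{k+1}$ force the first $N-M$ coordinates of each $x_k$ and the last $N-M$ coordinates of each $x_{k+1}$ to vanish, while coupling the surviving middle coordinates by a shift of size $N-M$. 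Propagating these constraints inward from both ends shows that the generating subspace of $W_i$ is supported, block by block, on pairwise disjoint coordinate windows of width $\tfrac{d}{r+1}$: block $x_k$ occupies the top window shifted down by $(k-1)(N-M)$.

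Granting this, the two conclusions follow by inspection. For $U_j=\sum_{i=j-1}^{j-r-1}W_{i,j}$ the $r+1$ summands are exactly the $r+1$ distinct block-windows of a single kernel (all $W_i$ coincide under this specialization), and these are pairwise disjoint, so the sum is direct and $\dim U_j=(r+1)\tfrac{d}{r+1}=d$. For $V_j$, the $r+2$ subspaces $\bH{j,}{-}W_{j,j+1}$ and $\bH{j,}{+}W_{i,j+1}$ for $i=j,\dots,j-r$ become the image under $C$ of the top window together with the images under $B$ of all $r+1$ windows; since $B$ and $C$ place their images on coordinate ranges $1,\dots,M$ and $N-M+1,\dots,N$ respectively, and $C$ carries the top window entirely past coordinate $M$, all $r+2$ images occupy disjoint windows. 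Hence the inner sum has dimension $(r+2)\tfrac{d}{r+1}$ and $\dim V_j=N-(r+2)\tfrac{d}{r+1}=d$.

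The main obstacle is the middle step: confirming that the specialization $B,C$, although highly degenerate, is nonetheless ``generic enough'' that the constructed subspaces do not accidentally overlap. Concretely one must pin down the coordinate bookkeeping and check that consecutive windows are separated by $N-M=2(2d-M)$ while each has width $2d-M=\tfrac{d}{r+1}$, so that the separation strictly exceeds the width and all windows (including the $C$-shifted one) are genuinely disjoint, uniformly in $r$ and in all admissible $N,M$ with $N+M=4d$ and $(2d-M)\mid d$. Once this is nailed down, both directness statements hold at the special point, and the genericity argument of the first paragraph upgrades them to generic channel matrices, giving $\dim U_j=\dim V_j=d$.
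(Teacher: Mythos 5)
Your proposal is correct, but it takes a genuinely different route from the paper's proof. The paper argues the independence of the $r+1$ summands $W_{i,1}$ by contradiction: a minimal linear dependence among vectors drawn from the $W_{i,1}$ is stitched together into an alignment path of length strictly greater than $r+1$, i.e., a nonzero kernel vector of the shifted matrix $\bA_{r+1}^{i_s}$, contradicting Lemma~\ref{l:AfullRank} since that matrix generically has trivial kernel (its kernel dimension would be $(r+2)M-(r+1)N = M-2d < 0$); for $V_1$ the paper proves only the subadditivity lower bound $\dim V_1 \geq N - (r+2)\frac{d}{r+1} = d$, remarking that any excess dimension could be discarded. You instead observe that both claims are directness statements, hence Zariski-open conditions in the channel entries (on the full-rank locus the kernels of the $\bA_r^i$ admit bases given by rational functions of the entries, and directness is the nonvanishing of a minor), and then verify directness at the same specialization $B$, $C$ used in the proof of Lemma~\ref{l:AfullRank} itself. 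I checked your window bookkeeping and it is sound: writing $t := 2d-M = \frac{d}{r+1}$, one has $M=(2r+1)t$, $N=(2r+3)t$, $N-M=2t$; the relations $Bx_k = -Cx_{k+1}$ force block $x_k$ of a kernel vector to be supported on the width-$t$ window at offset $2(r+1-k)t$, so the $r+1$ block projections occupy pairwise disjoint windows (giving $\dim U_j = d$), and the $C$-image of the block-$1$ window lands in coordinates $(2r+2)t+1,\ldots,(2r+3)t$ of $\C^N$, entirely beyond coordinate $M$, so all $r+2$ windows defining $V_j^\perp$ are disjoint (giving $\dim V_j = d$ exactly). As for what each approach buys: the paper's argument is specialization-free, and its alignment-path phrasing transfers directly to the non-critical cases treated in the appendix, where the construction mixes paths of lengths $r$ and $r+1$; your argument reuses the machinery of Lemma~\ref{l:AfullRank}, gives a completely explicit picture of the aligned subspaces at the special point, and proves the equality $\dim V_j = d$ rather than only the inequality the paper establishes. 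The concern you flagged---whether the degenerate specialization is ``generic enough''---is not actually an obstacle: genericity only requires the open directness locus to be nonempty and the parameter space (an affine space) to be irreducible, and your special point lies in the full-rank locus by Lemma~\ref{l:AfullRank}, which is exactly the same one-point-specialization logic the paper already uses.
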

\begin{proof}
We first show that $U_1$ has dimension $d$; by symmetry of the construction, the dimensions of $U_2$ and $U_3$ will also be $d$.  

The subspace $U_1=\sum_{i=-r}^0 W_{i,1}$ is the sum of $r+1$ subspaces $W_{i,j}$, which we claim are independent;
suppose to the contrary, that there is some set of linearly dependent vectors $w_{i_1},w_{i_2},\dots, w_{i_s}$, with $0\leq i_1\leq i_2\leq\dots\leq i_s\leq r$, and $w_i\in W_{-i,1}$, satisfying $w_{i_s}-\sum_{\ell=1}^{s-1} \lam_\ell w_{i_\ell} =0$. Let $s$ be the minimum such value, with all sets of subspaces $W_{i_1,j},W_{i_2,j},\dots,W_{i_{s-1},j}$ for $j=1,2,3$ being complementary. 

Now, by the definition of the subspaces $W_{i,j}$, for each vector 
$w_{i_\ell}\in W_{-i_\ell,1}$ there is a sequence $u^{2}_{i_\ell},\dots,u^{q+1}_{i_\ell}$ of length $q:=r+1-i_{s-1}$ satisfying $\bH{3}{1}w_{i_\ell}=\bH{3}{2}u^{2}_{i_\ell},\dots, \bH{q+2,}{q}u^{q}_{i_\ell}=\bH{q+2,}{q+1}u^{q+1}_{i_\ell}$. The linear combination $\sum_{\ell=1}^{s-1}\lam_\ell w_{i_\ell}$ thus gives rise to a sequence $u^1,\dots, u^{q+1}$ defined by $u^a = \sum_{\ell=1}^{s-1}\lam_\ell u^a_{i_\ell}$ satisfying 
\begin{equation}\begin{split}\label{e:chain}
\bH{3}{1}w_{i_s}&=\bH{3}{1}\bigg(\sum_{\ell=1}^{s-1} \lam_\ell w_{i_\ell}\bigg) = \bH32 u^2,
\\
\bH12 u^2&=\bH13 u^3 \\
&\; \; \vdots
\\
\bH{q+2,}{q}u^{q}&=\bH{q+2,}{q+1}u^{q+1}\,.
\end{split}
\end{equation} 
Note that by the minimality assumption of $s$, none of the $u^j$ vectors are zero. 

By the definition of $W_{-i_s,1}$, there is a length-$(i_s-1)$ sequence of vectors preceding $w_{i_s}$ satisfying alignment conditions similar to those in \eqref{e:chain}; together with $w_{i_s}$ and the vectors in \eqref{e:chain}, this sequence can be extended to a sequence of vectors of total length $q+i_{s}=r+1+(i_s-i_{s-1})>r+1$, none of which are zero. Stacking the first $r+2$ of these vectors produces a nonzero element in the kernel of $\bA^{i_s}_{r+1}$. However, $\bA^{i_s}_{r+1}$ is full-rank by Lemma~\ref{l:AfullRank}; the dimension of the kernel is $\big[(r+2)M-(r+1)N\big]^+= M+d\big( 2r+1-2r-3\big)=M-2d<0$, i.e. the kernel is trivial. This is the desired contradiction. 

We now check that $V_1$ has dimension $d$, and again by symmetry, the dimensions of $V_2$ and $V_3$ will also be $d$. Note that if $V_1$ had dimension greater than $d$, we could choose a $d$-dimensional subspace and this would still satisfy the alignment equations \eqref{e:Generalorthogonality}. But $V_1$ is the orthogonal complement of the sum of $r+2$ subspaces $W_{i,j}$ of dimension $d/(r+1)$, so by subadditivity of dimension, we have the lower bound on dimension $\dim V_1\geq N-(r+2)\dim W_{i,j}=d$. 
\end{proof}

\section{Equation counting}
One requirement for feasibility of interference alignment is that the number of parameters defining the set of strategies must be at least the number of the constraints imposed by the decoding conditions. 
In \cite{BCT11}
and~\cite{RGL11}, this argument is carried out to obtain necessary
conditions on the feasibility of interference alignment, which in the setting of Theorem~\ref{t:SymDOF} reduces to the constraint
$$
4d\leq M+N\,.
$$ This constraint is strictly implied by the constraints~\eqref{e:feasibility} (see Fig.~\ref{f:feasibleRegion}).

Let us briefly recall the calculation leading to the inequality $4d\leq M+N$. The strategy space consists of a dimension $d$ subspace at each transmitter and receiver, i.e. a point in the product of Grassmannians  $\big(G(d,M)\big)^3 \times \big(G(d,N)\big)^3 $. This is an algebraic variety of dimension $3 d (M-d) + 3 d (N-d)$, while
the orthogonality conditions \eqref{e:Generalorthogonality} can be seen to impose $6d^2$ constraints. 
Comparing number of variables to number of constraints gives the inequality. 

\section*{Appendix: proof of Theorem~\ref{t:achievability}}
Here we prove Theorem~\ref{t:achievability} showing achievability for $M,N,d$ satisfying \eqref{e:feasibility}.
Let $r$ be the (unique) integer such that 
\begin{equation}
rN<(r+1)M\quad \text{and}\quad (r+1)N\geq (r+2)M\,.
\end{equation}
Note that this implies, from equation~\ref{e:feasibility}, that
\begin{equation}\label{e:N1}
(2r+3)d\leq (r+1)N
\end{equation}
and
\begin{equation}\label{e:M1} 
(2r+1)d\leq (r+1)M\,.
\end{equation}

We prove achievability by examining two cases: 1) $d\leq (r+1)[(r+1)M-rN]$ and 2) $d>(r+1)[(r+1)M-rN]$. Case 1 means that all of the signal space $U_i$ can be obtained from alignment paths of length $r+1$ (up to integer rounding), whereas in case 2 we must use alignment paths of length $r$ as well in order to attain the required $d$ dimensions.

We first assume case 1 holds. Consider $\bA_r^i$ as in the proof of Proposition~\ref{p:critical}, and let $W_i$ be a dimension $\lf \frac{d}{r+1}\rf$ subspace in the kernel of $\bA_r^i$. Let $d':=d-(r+1)\lf \frac{d}{r+1}\rf$, and if $d'>0$ let $w_i$ be a 1-dimensional subspace in $\ker \bA_r^i \setminus W_i$. The projections $W_{i,j}$ are defined in Proposition~\ref{p:critical} and the subspaces $w_{i,j}$ are defined analogously. The spaces $w_1,w_2,w_3$ are required in order to accommodate the remainder left when dividing $d$ by $r+1$, and will together contribute $d'$ dimensions to each signal space $U_j$. 
We put 
\begin{equation}
U_j = \sum_{i=j-1}^{j-r-1}W_{i,j} + \sum_{i=j-1}^{j-d'}w_{i,j}
\end{equation}
and 
\begin{equation}
\begin{split}
V_j &= \Bigg( \bH{j}{-}W_{j,j+1}+\bH{j}{-}w_{j,j+1}\\&\quad+\sum_{i=j}^{j-r}\bH{j}+ W_{i,j+1} + \sum_{i=j}^{j-d'+1}w_{i,j}\Bigg)^\perp.
\end{split}
\end{equation}

If all of $U_j$'s constituent subspaces are complementary, then $U_j$ has dimension $(r+1)\flr{\frac d{r+1}}+d'=d$; the justification for this statement is similar to the proof of Lemma~\ref{l:UVdimension} and omitted here. To see that $V_j$ has dimension (at least) $d$, we observe that by subadditivity of dimension,
\begin{equation}
\dim V_j\geq N-(r+2)\left\lf \frac{d}{r+1}\right\rf-d' -e\,, 
\end{equation} where $e=0$ if $(r+1)|d$ and $e=1$ otherwise. 
Plugging in the inequality \eqref{e:N1} we obtain
\begin{align*}
\dim V_j &\geq \frac{2r+3}{r+1}d-d-e-\left\lf \frac{d}{r+1}\right\rf \\
&=d+\frac{d}{r+1}-\left\lf \frac{d}{r+1}\right\rf-e\geq d\,.
\end{align*}

Suppose now that case 2 holds, i.e. $d>(r+1)[(r+1)M-rN]$. This means that not all of the signal space~$U_i$ can be included in alignment paths of length $r+1$, so the remainder will be included in alignment paths of lenth $r$. Let $d':=d-(r+1)[(r+1)M-rN]$ and $d'' = d'-r\flr{\frac{d'}r}$. 
As before, denote by $W_i$ the kernel of the matrix $\bA_r^i$, having dimension $(r+1)M-rN$. Denote by $\pi$ the projection from $\C^{(r+1)M}\to\C^{rM}$ to the first $rM$ coordinates. The space $\pi(\ker \bA_r^i)$ is contained in $\bA_{r-1}^i$. Let $X_i$ for $i=1,2,3$ each be a $\flr{\frac{d'}{r}}$ dimensional subspace in $\ker \bA_{r-1}^i\setminus \pi(W_i)$, and let $w_i$ be a 1-dimensional subspace in $\ker \bA_{r-1}^i\setminus (\pi(W_i)+ X_i)$. 
Put
\begin{equation}
U_j = \sum_{i=j-1}^{j-r-1}W_{i,j} + \sum_{i=j-1}^{j-r}X_{i,j} +\sum_{i=j-1}^{j-d''}w_{i,j}
\end{equation}
and 
\begin{equation}
\begin{split}
&V_j = \Bigg( \bH{j}{-}(W_{j,j+1}+X_{j,j+1}+w_{j,j+1}) \\& \!+\sum_{i=j}^{j-r}\bH{j}\!+ W_{i,j+1}+\sum_{i=j}^{j-r+1}\bH{j}+ X_{i,j+1} \!+ \sum_{i=j}^{j-d'+1}w_{i,j}\Bigg)^\perp\!\!.
\end{split}
\end{equation}

As before, a naive count suggests that $U_j$ should have dimension $d$, and this can be justified similarly to Lemma~\ref{l:UVdimension}.

To see that $V_j$ has dimension at least $d$ we again use subadditivity of dimension to get 
\begin{align*}
&\dim V_j\\&\geq N-(r+2)[(r+1)M-rN]-(r+1)\flr{\frac{d'}r}-d''-e_1 \\
&= N-(r+2)[(r+1)M-rN]- \flr{\frac{d'}r}-d'-e_1\,,
\end{align*}
where $e_1$ is zero if $r|d'$ and $e_1$ is one otherwise. Letting $e_2:=\frac{d'}r - \flr{\frac{d'}r}$, we 
have 
\begin{align*}
\dim V_j&\geq N-(r+2)[(r+1)M-rN]-\frac{d'}r-d'+e_2-e_1 \\
&=N-\frac{(r+1)d}{r}+\frac{1}r[(r+1)M-rN]+e_2-e_1\\
&=d+\frac{(r+1)}{r}M-\frac{2r+1}{r}d+e_2-e_1
\,.
\end{align*}
Substituting $\frac{r+1}{2r+1}M$ for $d$, the inequality \eqref{e:M1} implies that
$$
\dim V_j\geq d+e_2-e_1\,.
$$
If $e_1$ is one then $e_2$ is strictly positive, so the fact that $\dim V_j$ is an integer implies $\dim V_j\geq d$. 

\bibliographystyle{ieeetr}
\bibliography{BIB}

\end{document}